\newenvironment{cenv}{\begin{list}{}{%
			\setlength{\labelwidth}{1.5em}%
			\setlength{\leftmargin}{\labelwidth}%
			\addtolength{\leftmargin}{\labelsep}%
			\setlength{\listparindent}{0em}%
			\setlength{\topsep}{10pt}%
			\setlength{\itemsep}{5pt}%
			\setlength{\parsep}{0pt}%
		}
	}{
	\end{list}
}
\newcounter{claimcounter}
\newcounter{algorithmcounter}
\newenvironment{algorithm}{
	
	~\refstepcounter{algorithmcounter}
	\begin{cenv}
		\item[{\textbf{Algorithm \arabic{algorithmcounter}.}}]
	}{
	\end{cenv}
}
\newcommand{\N}{\mathbb{N}}
\newcommand{\Start}[1]{\mathscr{S}(#1)}
\newcommand{\End}[1]{\mathscr{E}(#1)}
\newcommand{\sizeof}[1]{\left|#1\right|}
\newcommand{\act}{\operatorname{active}}
\newcommand{\inact}{\operatorname{inactive}}
\newcommand{\updatesequence}{\mathfrak{R}}
\newcommand{\round}{\mathfrak{r}}
\newcommand{\blockset}{\mathcal{B}}
\newcommand{\roundr}{r}
\newcommand{\rounds}{\ell}
\newtheorem{theorem}{Theorem}
\newtheorem{lemma}[theorem]{Lemma}
\newtheorem{corollary}[theorem]{Corollary}
\theoremstyle{definition}
\newtheorem{definition}[theorem]{Definition}
\tikzset{
	position/.style args={#1:#2 from #3}{
		at=($(#3)+(#1:#2)$)
	}
}
\title{Short Schedules for Fast Flow Rerouting}
\begin{document}

\author[1]{Saeed Akhoondian Amiri\thanks{samiri@mpi-inf.mpg.de}}
\author[2]{Szymon Dudycz\thanks{szymon.dudycz@gmail.com}}
\author[3]{Mahmoud Parham\thanks{mahmoud.parham@univie.ac.at}}
\author[3]{Stefan Schmid\thanks{stefan\_schmid@univie.ac.at}}
\author[4]{Sebastian Wiederrecht\thanks {sebastian.wiederrecht@tu-berlin.de}}

\affil[1]{Max-Planck Institute for Informatics, Germany}
\affil[2]{University of Wroclaw, Poland}
\affil[3]{University of Vienna, Austria}
\affil[4]{TU Berlin, Germany}
%\authorrunning{S. Akhoondian Amiri, S. Dudycz, M. Parham, S. Schmid, S. Wiederrecht} %
\maketitle

\sloppy

\begin{abstract}
This paper studies the fundamental problem of how to reroute 
$k$ unsplittable flows of a certain demand in a capacitated network 
from their current paths to their respective new paths, in a congestion-free 
manner and \emph{fast}. 
This scheduling problem has applications in traffic engineering in communication networks 
and has recently received much attention in software-defined networks,
in which updates are distributed over an asynchronous network by a software controller. However, 
existing algorithms for this problem either have a super-polynomial runtime 
or only compute \emph{feasible} schedules, which do not provide any guarantees on 
the length of the rerouting schedule. 

This paper presents the first polynomial-time algorithm 
for computing shortest update schedules
to reroute flows
in a congestion-free manner. 
We contribute an almost tight characterization 
of the polynomial-time
tractability of the problem:
We present the first polynomial-time solution for this problem
for \emph{two} flows, but also show that even the question 
whether a \emph{feasible} update 
schedule exists, is already NP-hard for \emph{six} flows. 

In fact, the presented algorithm runs in \emph{linear time} 
and is hence not only optimal in terms of scheduling but also 
asymptotically optimal in terms of runtime. \end{abstract}

% \keywords{Unsplittable Flows, Rerouting, Scheduling Algorithms, 
% DAGs, FPT, NP-Hardness, Software-Defined Networks}

\section{Introduction}\label{sec:intro}

Emerging software-defined communication networks 
provide \emph{direct} and ``\emph{algorithmic}'' control over the 
forwarding rules of nodes (i.e., routers and switches) 
and hence the network \emph{routes}. 
The resulting routes are not restricted to follow
only shortest paths and moreover, 
they can be flexibly \emph{adapted over time}, e.g., depending
on certain events in the dataplane. 
Indeed, there are many reasons why flows may need to be 
\emph{rerouted}~\cite{update-survey}, including 
security and policy changes (e.g., suspicious traffic
is rerouted via a firewall), 
traffic engineering optimizations, reactions to changes in the demand, 
maintenance work, failures, etc. 

Implementing route changes however is challenging, 
since updating a route usually involves the distribution of
new (forwarding) rules across the asynchronous communication
network, and since even \emph{during} such route changes, 
it is important to maintain certain safety properties.
In particular, the routes of flows should be changed without
causing any congestion or introducing temporary forwarding loops.
For example, in a Software Defined Network (SDN), 
rules are communicated by the remote software controller. 
Therefore, updates have to be distributed in \emph{rounds}, 
in which switches acknowledge the next batch of updates~\cite{roger,sigmetrics16,ludwig2015scheduling}. 
 
This introduces a \emph{scheduling problem}: In which order
to update the different forwarding rules for the different flows and switches
over time, such that these safety properties are maintained at any time?
And how to schedule these updates such that the rerouting time
(and number of controller interactions) is minimized?

\subsection{A Simple Example}

Figure~\ref{fig:ex} gives an example
of the flow rerouting problem.  
We want to schedule 
the rerouting of $2$ flows in a $5$-node network,
connecting nodes $\{s,u,v,w,t\}$
with $7$ edges $\{\{s,u\},\{s,w\},\{u,w\},
\{u,v\},\{v,w\},\{v,t\},\{w,t\}\}$.
In this example, both flows originate at $s$
and end at $t$: The first flow is indicated
in \textcolor{red}{red} and the second
flow in \textcolor{blue}{blue}. 

\begin{figure}[t]
	\begin{center}
		\begin{tikzpicture}[scale=0.9]
		\tikzset{>=latex} 
		
		\node (o) [] {};
		
		\node (c-1-1) [position=0:0mm from o] {};

		\foreach\i in {1} {
			\foreach\j in {1} {
				
				\node (s-\i-\j) [fill,inner sep=0pt,minimum size=7pt,draw,circle,thick,scale=0.7,position=162:17mm from c-\i-\j] {};
				
				\node (s-\i-\j-1) [inner sep=0pt,position=330:0.72mm from s-\i-\j] {};
				\node (s-\i-\j-2) [inner sep=0pt,scale=0.75,position=270:0.72mm from s-\i-\j] {};
				\node (s-\i-\j-3) [inner sep=0pt,scale=0.75,position=90:0.72mm from s-\i-\j] {};
				\node (s-\i-\j-4) [inner sep=0pt,scale=0.75,position=30:0.72mm from s-\i-\j] {};
				
				\node (t-\i-\j) [fill,inner sep=0pt,minimum size=7pt,draw,circle,thick,scale=0.7,position=18:17mm from c-\i-\j] {};
				
				\node (t-\i-\j-1) [inner sep=0pt,scale=0.75,position=270:0.72mm from t-\i-\j] {};
				\node (t-\i-\j-2) [inner sep=0pt,scale=0.75,position=210:0.72mm from t-\i-\j] {};
				\node (t-\i-\j-3) [inner sep=0pt,scale=0.75,position=160:0.72mm from t-\i-\j] {};
				\node (t-\i-\j-4) [inner sep=0pt,scale=0.75,position=90:0.72mm from t-\i-\j] {};
				
				\node (u-\i-\j) [fill,inner sep=0pt,minimum size=7pt,draw,circle,thick,scale=0.5,position=224:10mm from c-\i-\j] {};
				
				\node (u-\i-\j-1) [inner sep=0pt,scale=0.75,position=90:0.72mm from u-\i-\j] {};
				\node (u-\i-\j-2) [inner sep=0pt,scale=0.75,position=180:0.72mm from u-\i-\j] {};
				\node (u-\i-\j-3) [inner sep=0pt,scale=0.75,position=120:0.72mm from u-\i-\j] {};
				\node (u-\i-\j-4) [inner sep=0pt,scale=0.75,position=0:0.72mm from u-\i-\j] {};	
				
				\node (v-\i-\j) [fill,inner sep=0pt,minimum size=7pt,draw,circle,thick,scale=0.5,position=316:10mm from c-\i-\j] {};
				
				\node (v-\i-\j-1) [inner sep=0pt,scale=0.75,position=90:0.72mm from v-\i-\j] {};
				\node (v-\i-\j-2) [inner sep=0pt,scale=0.75,position=180:0.72mm from v-\i-\j] {};
				\node (v-\i-\j-3) [inner sep=0pt,scale=0.75,position=60:0.72mm from v-\i-\j] {};
				\node (v-\i-\j-4) [inner sep=0pt,scale=0.75,position=0:0.72mm from v-\i-\j] {};		
				
				\node (w-\i-\j) [fill,inner sep=0pt,minimum size=7pt,draw,circle,thick,scale=0.5,position=90:15mm from c-\i-\j] {};
				
				\node (w-\i-\j-1) [inner sep=0pt,scale=0.75,position=320:0.72mm from w-\i-\j] {};
				\node (w-\i-\j-2) [inner sep=0pt,scale=0.75,position=240:0.72mm from w-\i-\j] {};
				\node (w-\i-\j-3) [inner sep=0pt,scale=0.75,position=180:0.72mm from w-\i-\j] {};
				\node (w-\i-\j-4) [inner sep=0pt,scale=0.75,position=30:0.72mm from w-\i-\j] {};	
				\node (w-\i-\j-5) [inner sep=0pt,scale=0.75,position=270:0.72mm from w-\i-\j] {};
				\node (w-\i-\j-6) [inner sep=0pt,scale=0.75,position=270:0.72mm from w-\i-\j] {};
				
				\pgfmathtruncatemacro{\ilabel}{\j+3*(\i-1)};
				
				\node (labels-\i-\j) [position=150:4mm from s-\i-\j] {$s$};	
				\node (labelt-\i-\j) [position=30:4mm from t-\i-\j] {$t$};
				\node (labelt-\i-\j) [position=270:4mm from u-\i-\j] {$u$};
				\node (labelt-\i-\j) [position=270:4mm from v-\i-\j] {$v$};
				\node (labelt-\i-\j) [position=90:4mm from w-\i-\j] {$w$};
				
			}	
		}
		
		\begin{pgfonlayer}{bg}    % select the background layer
		
		\draw [cyan,line width=6.5pt,line cap=round,opacity=0.2] (s-1-1) to (w-1-1);
		\draw [cyan,line width=6.5pt,line cap=round,opacity=0.2] (w-1-1) to (t-1-1);
		
		\draw [Magenta,line width=6.5pt,line cap=round,opacity=0.2] (s-1-1) to (u-1-1);
		\draw [Magenta,line width=6.5pt,line cap=round,opacity=0.2] (u-1-1) to (v-1-1);
		\draw [Magenta,line width=6.5pt,line cap=round,opacity=0.2] (v-1-1) to (t-1-1);
		
		\draw [blue,line width=1.3pt,->] (s-1-1-4) to (w-1-1-2);
		\draw [blue,line width=1.3pt,->] (w-1-1-1) to (t-1-1-3);
		
		\draw [red,line width=1.3pt,->] (s-1-1-1) to (u-1-1-3);
		\draw [red,line width=1.3pt,->] (u-1-1-4) to (v-1-1-2);
		\draw [red,line width=1.3pt,->] (v-1-1-3) to (t-1-1-2);
		
		\draw [blue,line width=1.3pt,->,dashed] (s-1-1-2) to (u-1-1-2);
		\draw [blue,line width=1.3pt,->,dashed] (u-1-1-1) to (w-1-1-6);
		\draw [blue,line width=1.3pt,->,dashed] (w-1-1-5) to (v-1-1-1);
		\draw [blue,line width=1.3pt,->,dashed] (v-1-1-4) to (t-1-1-1);
		
		\draw [red,line width=1.3pt,->,dashed] (s-1-1-3) to (w-1-1-3);
		\draw [red,line width=1.3pt,->,dashed] (w-1-1-4) to (t-1-1-4);
		
		\node (l-sw-1-1) [position=130:17.5mm from c-1-1] {$\frac{1}{1}$};
		\node (l-wt-1-1) [position=50:17.5mm from c-1-1] {$\frac{1}{2}$};
		\node (l-uw-1-1) [position=160:7mm from c-1-1] {$\frac{0}{1}$};
		\node (l-wv-1-1) [position=20:7mm from c-1-1] {$\frac{0}{1}$};
		\node (l-uv-1-1) [position=270:12mm from c-1-1] {$\frac{1}{1}$};
		\node (l-su-1-1) [position=200:15mm from c-1-1] {$\frac{1}{2}$};
		\node (l-vt-1-1) [position=340:15mm from c-1-1] {$\frac{1}{1}$};

		\end{pgfonlayer}

		\end{tikzpicture}	
	\end{center}
	\vspace{-1em}
	\caption{The flow rerouting problem: Example.} 
	\label{fig:ex}
\end{figure}

Each of the two flows has an \emph{original (``old'') route} and a \emph{new route},
which it should be updated to. 
We indicate the original route 
with a \emph{solid} 
line and the new route with a \emph{dotted} line.
For example, the  original route of the red flow is 
$(s,u,v,t)$ and needs to be updated to $(s,w,t)$.
The  original route of the blue flow is 
$(s,w,t)$ and needs to be updated to $(s,u,w,v,t)$.

In other words, each flow defines an 
\emph{update pair}, consisting of two routes (the original and the new one):
Acordingly, updates are denoted using tuples, i.e.,
$(v,\textcolor{blue}{B})$ means that 
we activate all 
inactive (dotted) 
outgoing blue edges (the new \emph{forwarding rules}) of vertex $v$ and deactivate all of its active 
	(solid) outgoing edges (the old \emph{forwarding rules}).

In this example, we assume that both flows consume 
$1$ unit of bandwidth on each link they traverse.
Both flows are unsplittable.
Accordingly, we annotate the network edges in the figure 
with two numbers,
$\frac{x}{y}$, where $x$ denotes the bandwidth
consumed  by the two flows on the corresponding edge 
\emph{before} rerouting and $y$ denotes the edge capacity.

How to reroute the two flows from their old paths to their
new paths in a congestion-free manner?
In this example, initially, 
we cannot perform the update $(s,\textcolor{red}{R})$, 
since the edge $(s,w)$ has a capacity of $1$ which is currently
used by the blue flow. So 
	the first part of an update schedule could look like this,
	where the updates in this sequence are performed one-by-one:
		$$
		 (u,\textcolor{blue}{B}) ,  (s,\textcolor{blue}{B}) ,  (s,\textcolor{red}{R}) , \dots
		$$
	In this case the red flow would be 
	routed along the edge $(s,w)$.
	However, from there, it could not reach $t$ anymore,
	after performing the update $(s,\textcolor{red}{R})$:
	the schedule is invalid. In fact,
	no valid update sequence can start like 
	the example above.
	
	One valid sequence is the following:
		$$
		 (u,\textcolor{blue}{B}) ,  (s,\textcolor{blue}{B}) ,  (w,\textcolor{red}{R}) ,  
		(s,\textcolor{red}{R}) ,  (u,\textcolor{red}{R}) ,  (v,\textcolor{red}{R}) ,  
		(v,\textcolor{blue}{B}) ,  (w,\textcolor{blue}{B})
		$$
	But this schedule requires $8$ rounds, 
	updating only one vertex for one flow at a time. 
	
	A faster update sequence 
	schedules multiple updates in a single round, if 
	possible without introducing congestion: Updates that are 
	scheduled for the same round are asynchronous and 
	can occur in any order,
	%\footnote{Updates scheduled for the
	%same round do not involve any interaction
	%with the SDN controller which collects acknowledgments 
	%before scheduling the updates of the next round.
	%This is due to the fact that transmissions over the network
	%and datastructure updates in the Forwarding Information Base (FIB)
	%are asynchronous.}, 
	and hence, need to be performed carefully. 
The following schedule requires 4 rounds and 
is the shortest valid congestion-free
flow rerouting solution for our example:	
		$$
		(u,\textcolor{blue}{B}) ,  \{(s,\textcolor{blue}{B}) ,  
		(v,\textcolor{blue}{B}) ,  (w,\textcolor{red}{R})\} ,  
		(s,\textcolor{red}{R}) ,  \{(u,\textcolor{red}{R}) ,  (v,\textcolor{red}{R}) ,  (w,\textcolor{blue}{B})\}
		$$

A rigorous formal model for this
problem will be given later in this paper.

\subsection{Our Contributions}

This paper initiates the study of polynomial-time
scheduling algoritms to reroute flows in a congestion-free manner
and \emph{fast}. In particular, we contribute the, 
to the best of our knowledge \emph{first},
polynomial-time algorithm to compute 
shortest rerouting schedules \emph{for two flows}.
In fact, our algorithm runs in (deterministic) \emph{linear time};
 its runtime is hence asymptotically optimal.
Moreover, our algorithm is elegant.

We show that this is almost as good as one can hope for
when investing only polynomial time algorithms:
we rigorously prove that even \emph{deciding}
whether a congestion-free reroute schedule exists
is NP-hard, already for 6 flows.
In other words, we provide an almost tight characterization
of the polynomial-time solvability of the problem.

In addition to our formal results, we also show empirically
that the schedules produced by our algorithm are significantly
shorter than the state-of-the-art algorithms focusing on \emph{feasibility}~\cite{icalp18}.

%provide a Mixed Integer Program to 
%optimally solve the general scheduling problem in super-polynomial-time.

%This paper reports on simulation results, 
%which show that our algorithm computes significantly 
%shorter schedules compared to the state-of-the-art algorithms 
%focusing on feasibility, and is significantly faster than 
%approaches based on integer programming. The 
%simulation code will be released to the community together with this paper.

%Last but not least, this paper also presents a general and 
%\emph{rigorous} formal model for the flow rerouting problem:
%while special problem variants have been studied intensively over the last 
%years~\cite{update-survey},
%the model is often described only intuitively in existing literature.

\subsection{Paper Organization}

The remainder of this paper is organized
as follows. Section~\ref{sec:model} presents
a formal model for the problem studied in this paper.
Section~\ref{2flows} describes and analyzes
a polynomial-time update scheduling algorithm 
for two flows and Section~\ref{sec:np-hard} 
presents the hardness proof for six flows. 
We present a nonpolynomial-time algorithm to compute
optimal schedules in Section~\ref{sec:mip}
and present simulation results in Section~\ref{sec:sims}.
After reviewing related work in Section~\ref{sec:relwork},
we conclude our contribution in Section~\ref{sec:conclusion}.

\section{A Rigorous Formal Model}\label{sec:model}

This section presents a rigorous formal model for the
fast congestion-free flow rerouting problem 
introduced intuitively in Figure~\ref{fig:ex}.
The problem can be described in terms of edge capacitated directed graphs. 
In what follows, we will assume basic familiarity with directed graphs and we refer the
reader to~\cite{digraphs} for more details. We denote a directed
edge $e$ with head $v$ and tail $u$ by $e=(u,v)$. For an undirected
edge $e$ between vertices $u,v$, we write $e=\{u,v\}$; 
$u,v$ are called
endpoints of $e$.

For ease of presentation and without loss of generality, we consider directed graphs
with only one source vertex (where flows will originate) and one terminal
vertex (the flows' sink). We call this graph a \emph{flow network}.
The forwarding rules that define the paths considered in 
our problem, are best seen as flows in a network. 
We will be interested in rerouting flows such that natural 
notions of consistency are preserved, such as loop-freedom
and congestion-freedom. 
In particular, we will say that a set of
  flows is valid if the edge capacities 
of the underlying network are respected.
\begin{definition}[Flow Network, Flow, Valid Flow Sets]
A \textbf{flow network} is a directed capacitated graph $G=(V,E,s,t,c)$, 
where $s$ is the \emph{source}, $t$ the \emph{terminal}, $V$ is the set 
of vertices with $s,t\in V$, $E\subseteq V\times V$ is a set of ordered 
pairs known as edges, and $c\colon E\rightarrow\N$ a 
capacity function assigning a capacity $c(e)$ to every edge $e\in E$.
An \emph{$(s,t)$-flow} $F$ of capacity $d\in\N$ is a 
\emph{directed path} from $s$ to $t$ in a flow network such that 
$d\leq c(e)$ for all $e\in E(F)$. Given a  
$\mathcal{F}$ of $(s,t)$-flows $F_1,\dots,F_k$ with demands 
$d_1,\dots,d_k$ respectively, we call $\mathcal{F}$ a \textbf{valid flow
set}, or simply \textbf{valid}, if $c(e)\geq\sum_{i\colon e\in E(F_i)}d_i$.
\end{definition}
%Having defined our notion of flows, we next introduce the notion of flow pairs. 
Recall that we consider the problem of how to reroute a current (old) flow to a 
new flow, and hence we will consider such flows in ``update pairs'': 
\begin{definition}[Update Flow Pair] 
	An \textbf{update flow pair} $P=(F^o,F^u)$ consists 
	of two $(s,t)$-flows $F^o$, the \emph{old flow}, and $F^u$, 
	the \emph{update (or new) flow}, each of demand $d$.
\end{definition}
%Now we can define the \textbf{update flow network}:
The update flow network is a flow network  (the underlying edge capacitated graph)  
together with a \textbf{valid} family of flow pairs. For an illustration, recall 
the initial network in Figure~\ref{fig:ex}: 
The old flows are presented as the directed paths made of solid edges 
and the new ones are represented by the dashed edges. 

%With all the graphs in place, we can consider the actual rerouting problem.
A flow can be rerouted by updating the outgoing edges of the vertices along its path
(the forwarding rules), i.e., 
by blocking the outgoing edge of the old flow and by allowing
traffic along the outgoing edge of the new flow (if either of them
exists). If these two edges coincide, there are no changes. 
In order to ensure transient consistency, the updates of these 
outgoing edges need to be scheduled 
over time: this results in a sequence which can be partitioned into
update rounds.
% For example, in the context of Software-Defined Networks,
%a controller partitions the update commands into multiple rounds, 
%and issues the next subset of updates only once it has received confirmation
%of the successful installation of the updates of the previous round (for details,
%see e.g.~\cite{ludwig2015scheduling}). 

%Recall that each update command is a pair. 
% \saeed{commented out}% \stefan{to do:
%   fix}\saeed{This text is obsolete as there is no such figure here}
% Consider the first update in Figure \ref{fig:ex2}. Here there is only one 
% flow pair $P=(F^o,F^u)$ and the update of vertex $v_1$ (1) renders its outgoing solid 
% edge dashed, indicating that is not to be used for the routing anymore, and 
% (2) renders its dashed outgoing edge solid: from now on, the new route will follow this edge.
%Succinctly: 
%An \textbf{update} is a tuple $(v,P)\in V\times \mathcal{P}$.
%In order to resolve updates, we need to distinguish between
%\emph{active} and \emph{inactive} edges for a given update flow
%pair. To be more precise, we need a tool that tells us, for all possible edge subsets, 
%which edges we
%are allowed to use in the current state of our migration, and which we
%cannot use. We call this tool \emph{activation label}. In simple words, if the activation label of an edge (or link)
%$e$ is true (or states $\act$) for the flow $i$, we can expect that a flow $i$ routes
%along $e$.
\begin{definition}[Resolving Updates, Update Sequence]
Given $G=(V,E,\mathcal{P},s,t,c)$ and an update flow pair $P=(F^o,F^u)\in\mathcal{P}$ 
of demand $d$, we consider the \textbf{activation label} $\alpha_P\colon E(F^o\cup F^u)\times 2^{V\times\mathcal{P}}\rightarrow \left\{ \act,\inact \right\}$.
For an edge $(u,v)\in E(F^o\cup F^u)$ and a set of updates $U\subseteq V\times\mathcal{P}$, $\alpha_P$ 
is defined as follows:
\[
\alpha_P((u,v),U)=\left\{\begin{array}{ll} \act, &
                                                   \text{if}~(u,P)\notin
                                                   U, (u,v)\in
                                                   E(F^o), \\\act, &
                                                                     \text{if}~(u,P)\in
                                                                     U, (u,v)\in
                                                                     E(F^u),\\
                           \inact, & \text{otherwise.}\end{array}\right.
\]	
The graph: 
$$\alpha(U,G)=(V,\left\{ e\in E \left|\exists i\in[k]~\text{s.t.}~\alpha_{P_i}(e,U)=\act
  \right.\right\})$$ is called the \textbf{$U$-state} of $G$ and we
call any update in $U$ \textbf{resolved}.

An \textbf{update sequence} $\updatesequence=(\round_1,\dots,\round_{\ell})$ is an ordered
partition of $V\times\mathcal{P}$. For every such $i$ we define $U_i=\bigcup_{j=1}^i\round_i$
and consider the activation label $\alpha_P^i(e)=\alpha_P(e,U_i)$ for
every update flow pair $P=(F^o,F^u)\in\mathcal{P}$ of demand $d$ and edge $e\in E(F^o\cup F^u)$.
\end{definition}

%\saeed{This text is obsolete as there is no such figure}
%As an illustration, after the second update in Figure \ref{fig:ex2}, 
%one of the original solid edges is still not deactivated. However, 
%already two of the new edges have become solid (i.e., active). 
%So in the picture of the second update, the set $U=\left\{
%  (v_1,P),(v_2,P) \right\}$ has been resolved.
%\saeed{Above text is obsolete as there is no such figure}

Let $(u,P)$ be some update. When we say that we want to \emph{resolve}
$(u,P)$, we mean that we target a state of $G$ in which
$(u,P)$ is resolved. In most cases this will mean to add $(u,P)$ to
the set of already resolved updates. 
With a slight abuse of notation, let define $\alpha_P(U,G) = (V(F^o)\cup V(F^u), (E(F^o) \cup E(F^u))$.

%We now can determine, for a set of updates, which edges
%we can and which ones we cannot use for our flows. In the end, we
%want to describe a process of migration steps, starting from the
%\emph{empty}, or \emph{initial state}, in which no update has been
%resolved, and finishing in a state, where the only active edges are
%exactly those of the new flows of every update flow pair. This
%sequence of updates should be chosen in such a way that every update
%is resolved exactly once. Therefore, we have to extend the definition of
%resolving updates by the following:
%\begin{definition}[Update Sequence]
%An \textbf{update sequence} $\updatesequence=(\round_1,\dots,\round_{\ell})$ is an ordered
%partition of $V\times\mathcal{P}$. For every such $i$ we define $U_i=\bigcup_{j=1}^i\round_i$
%and consider the activation label $\alpha_P^i(e)=\alpha_P(e,U_i)$ for
%every update flow pair $P=(F^o,F^u)\in\mathcal{P}$ of demand $d$ and edge $e\in E(F^o\cup F^u)$. 	
%\end{definition}
In the definition of an update sequence, $\round_i$ for $i\in[\ell]$ is a \textbf{\emph{round}}. 
We define the \emph{initial round} $\round_0=\emptyset$.
Recall that we consider unsplittable flows which travel along a single path. 
%Up until now, the notion of  active and 
%inactive edges was just an abstract label without real meaning. 
The following will clarify 
how active edges are to be used. 
%We formally define this based on the definition of an update flow network $G$.
\begin{definition}[Transient Flow, Transient Family]
The flow pair $P$ is called \textbf{transient} for some set of updates $U\subseteq V\times\mathcal{P}$, 
if $\alpha_P(U,G)$ contains a unique valid $(s,t)$-flow $T_{P,U}$. 
If there is a \textbf{valid} family $\mathcal{P}=\left\{ P_1,\dots P_k \right\}$ of update flow pairs with 
demands $d_1,\dots,d_k$ respectively, we call $\mathcal{P}$ a 
\textbf{transient family} for a set of updates $U\subseteq
V\times\mathcal{P}$, if and only if every $P\in\mathcal{P}$ is transient for $U$.
%The family of transient flows in the $U$-state of $G$ is denoted by 
%$\mathcal{T}_{\mathcal{P},\mathcal{U}}=\left\{ T_{P_1,U},\dots,T_{P_k,U}\right\}$. 
\end{definition}

In short, the transient flows look
like a path of active edges for flow $F$, which starts at the source
vertex and ends at the terminal vertex. Note that there may be some
active edges connected to this path, but they cannot be used to route
the flow since $T_{P,U}$ is unique after resolving
$U$.
 %\saeed{commented out obsolete texts} % We again refer to Figure~\ref{fig:ex2}. 
% In each of the different states, the transient flow is depicted as the light blue 
% line connecting $s$ to $t$ and covering only solid (i.e., active) edges.
%Now it is possible to define the transient family: The
%intermediate flow structure which will be obtained after a sequence of
%updates. In fact,
The collection of the transient flows corresponding to the transient family 
is a snapshot of a valid
updating scenario.
Whenever we say a path $p$ ``routes'' a flow $F$,
we mean that
all edges of path $p$ are active for flow $F$.
%\begin{definition}[Transient Family]
%If there is a \textbf{valid} family $\mathcal{P}=\left\{ P_1,\dots P_k \right\}$ of update flow pairs with 
%demands $d_1,\dots,d_k$ respectively, we call $\mathcal{P}$ a 
%\textbf{transient family} for a set of updates $U\subseteq
%V\times\mathcal{P}$, if and only if every $P\in\mathcal{P}$ is transient for $U$.
%
%The family of transient flows in the $U$-state of $G$ is denoted by 
%$\mathcal{T}_{\mathcal{P},\mathcal{U}}=\left\{ T_{P_1,U},\dots,T_{P_k,U}\right\}$. 
%\end{definition}
%We are now ready to define consistency rules.

%Recall that in a valid
%In a valid update, there should always be a transient flow for any flow and the
%flow set (for any set of transient flows) should be valid. In particular, 
%they should not violate congestion. So more precisely,
In each round $\round_i$, any subset of updates of $\round_i$ resolved without
considering the remaining updates of $\round_i$ should allow a
transient flow for every flow pair. This models the asynchronous
nature of the implementation of the update commands in each round.
% In the
%following, we rephrase this to provide a concrete model.
\begin{definition}[Consistency Rule] 
Let $\updatesequence=(\round_1,\ldots,\round_\ell)$ be an update sequence and $i\in[\rounds]$.
We require that for any $S\subseteq\round_i,\mathcal{U}_i^S\coloneqq S\cup \bigcup_{i-1} \round_j$,
there is a family of transient flow pairs.% $\mathcal{T}_{\mathcal{P},\mathcal{U}_i^S}$.
\end{definition}
%Recall the example in Figure \ref{fig:ex} which illustrated that 
%rounds cannot be chosen trivially in order to avoid congestion. 
%The set 
%$\round_1\coloneqq\left\{ (v_2,P)\right\}\subseteq V\times\left\{ P\right\}$, as seen here, 
%does not obey the consistency rule. Now we have a natural definition of valid updates.
\begin{definition}[Valid Update]
An update sequence $\updatesequence$ is \textbf{valid}, or \textbf{feasible}, 
if every round $\round_i\in\updatesequence$ obeys the consistency rule.
\end{definition}
%
%\saeed{commented useless texts}
% In Figure \ref{fig:ex2}\stefan{figure 2 of soda: shall we have it here
% too and reference arxiv, saying we stole it from there? maybe
% better make a new one..}, we give an example of a feasible update sequence. 
% The set of updates here is $\left\{ (s,P),(v_1,P),(v_2,P),(t,P) \right\}$ and the update 
% sequence is as follows: $\updatesequence=(\left\{ (v_1,P),(t,P)\right\},\left\{ (v_2,P)\right\},\left\{ (s,P)\right\})$. 
% In each round, including the initial state, the transient flow is depicted as a light blue line. 
% Note that resolving $(t,P)$ does not change the label of any edge; hence it can be resolved in any round. 

Note that we do not forbid any edge $e\in E(F^o_i\cap F^u_i)$ and 
we never activate or deactivate such an edge. Starting with an initial update 
flow network, these edges will be active and remain so until all updates are resolved. 
Hence there are vertices $v\in V$ with either no outgoing edge for a given flow pair $F$ at all;
or $v$ has an outgoing edge, but this edge is used by both the old and the update flow of $F$. 
We will call such updates $(v,P)$ \emph{empty}.

Empty updates do not have any impact on the actual problem since they
never change any transient flow. Hence they can always be scheduled in
the first round and thus w.l.o.g.~we can ignore them in the
following. Let us now define the main problem which we
consider in this paper.
\begin{definition}[\textsc{$k$-Network Flow Update Problem}]
Given an update flow network $G$ with $k$ update flow pairs, 
is there a feasible update sequence $\updatesequence$? The corresponding optimization 
problem is: What is the minimum $\ell$ such that there exists a valid update sequence 
$\updatesequence$ using exactly $\ell$ rounds?
\end{definition}
% In the following, we present some \textbf{preliminaries} which will be assumed
% in the remainder of this paper.

Finally, we introduce some \textbf{preliminaries}.
Let $G=(V,E,\mathcal{P},s,t,c)$ be an update flow network consisting
of two flow pairs $P^1,P^2$, such that 
each flow pair is an acyclic graph. For a flow pair $P^i$ ($i\in \{1,2\}$),
let $\prec^i$ be a topological 
order on its vertices $V=\left\{ v_1,\dots,v_n \right\}$. We may
write $\prec$ for $\prec^i$ whenever $i$ is clear from the context. 
%\saeed{I
%  edited the text}\stefan{we need to revisit
%this topological order: it is per flow pair. do we need global order
%at all?}

%\saeed{I edited the following text}
The following applies to both feasible and shortest schedules,
and hence, we use the same terminology as Amiri et
al.~\cite{icalp18}. We only slightly modify the terminology 
as unlike Amiri et al., we do not require that the sum of updates forms a DAG
(but only the pairs). 
Let $P_i=(F^o_i,F^u_i)$ be an update flow 
pair of demand $d$. % \stefan{new: We assume that
% the flow pair is acyclic, i.e., it forms a DAG.
We define a topological order $v_1^i,\dots,v_{\ell_i^o}^i$
on the vertices of $F^o_i$ w.r.t.~$\prec^i$ (recall
that we $P_i$ forms an acyclic graph); analogously,
let $u_1^i,\dots,v_{\ell^u_i}^i$ 
be the order on $F^u_i$. Furthermore, let 
$V(F^o_i)\cap V(F^u_i)=\left\{ z_1^i,\dots,z^i_{k_i} \right\}$ 
be ordered by $\prec^i$ as well. 
The subgraph of $F_i^o\cup F_i^u$ induced by the set $\left\{ v\in
  V(F_i^o\cup F_i^u) ~|~ z_j^i \prec v \prec z^i_{j+1} \right\}$,
$j\in[k_i-1]$, is called the $j$th \emph{block} of the 
update flow
pair $F_i$, or simply the $j$th \emph{$i$-block}. We 
will denote this block by $b^i_j$.

For a block $b$, we define $\Start{b}$ to be 
the \emph{start of the block}, i.e., the smallest vertex
w.r.t.~$\prec^i$; similarly, $\End{b}$ is the \emph{end of the block}:
the largest vertex w.r.t.~$\prec^i$. 

Let $G=(V,E,\mathcal{P},s,t,c)$ be an update flow network with
$\mathcal{P}=\left\{ P_1,\dots,P_k \right\}$ and let
$\blockset$ be the set of its
blocks. We define a binary relation $<$ between two blocks as follows. 
For two blocks $b_1,b_2\in \blockset$, where $b_1$ is an $i$-block and $b_2$ a
$j$-block, $i,j\in[k]$, we say $b_1<b_2$ ($b_1$ \emph{is smaller than}
$b_2$) if one of the following holds.
\begin{enumerate}[i]
\item $\Start{b_1} \prec \Start{b_2}$,
\item if $\Start{b_1}=\Start{b_2}$ then $b_1<b_2$, if $\End{b_1} \prec \End{b_2}$,
\item if $\Start{b_1}=\Start{b_2}$ and $\End{b_1}=\End{b_2}$ then $b_1<b_2$, if $i<j$.
\end{enumerate}
Let $b$ be an $i$-block and $P_i$ the corresponding update flow pair. 
For a feasible update sequence $\updatesequence$, we will denote the round 
$\updatesequence(\Start{b},P_i)$ by $\updatesequence(b)$. We say that $i$-block $b$ is
\emph{updated}, if all edges in $b\cap F^u_i$ are active and all edges
in $b\cap F_i^o\setminus F_i^u$ are inactive.

\section{A Fast Scheduling Algorithm}\label{2flows}

This section presents an elegant, \emph{linear-time} and deterministic 
and deterministic algorithm to compute
shortest update schedules for two flows. 
%Our algorithm is based on a dependency-graph approach, and
%not only quickly finds a feasible, but also a shortest schedule (minimum number of
%rounds).

Let $G=(V,E,\mathcal{P},s,t,c)$ be an update 
flow network where $(V,E)$ is the union of the
DAGs implied by the flow pairs.
Let $\mathcal{P}=\left\{ B,R \right\}$ be the two update flow pairs 
with $B=(B^o,B^u)$ and $R=(R^o,R^u)$ of demands $d_B$ and $d_R$. 
As in the previous section, we identify $B$ with blue and $R$ with red.

We say that an $I$-block $b_1$ is \emph{dependent} 
on a $J$-block $b_2$, $I,J\in\left\{ B,R \right\}$, $I\neq J$, if 
there is an edge $e\in (E(b_1)\cap E(I^u))\cap(E(b_2)\cap E(J^o))$, 
but $c(e) < d_I+d_J$. In fact, to update $b_1$, we either 
violate capacity constraints, or we update $b_2$ first in order to 
prevent congestion. In this case, we write $b_1\rightarrow b_2$ 
and say that $b_1$ \emph{requires} $b_2$. A block that does not depend on any other block is called \emph{free}.

We say a block $b$ is a \emph{free block}, if it is not dependent on any 
other block. A \emph{dependency graph} of $G$ is a graph 
$D=(V_D,E_D)$ for which there exists a bijective 
mapping $\mu\colon V(D)\leftrightarrow B(G)$, and there is an 
edge $(v_b,v_{b'})$ in $D$ if $b\rightarrow b'$. Clearly, a block $b$ 
is free if and only if it corresponds to a sink in $D$.

\smallskip

We propose the following algorithm to check the feasibility 
of the flow rerouting problem. 

\begin{algorithm}\textbf{Feasible $2$-Flow DAG Update}\label[algorithm]{alg:main} 
	\begin{enumerate}
		\item [] \textbf{Input: Update Flow Network $G$}
		\item Compute the dependency graph $D$ of $G$.
		\item If there is a cycle in $D$, return \emph{impossible to update}.
		\item While $D\neq \emptyset$ repeat:
		\begin{enumerate}[i]
			\item \label{lbl:step} Update all blocks which correspond to the sink 
			vertices of $D$.
			% as in the proof of \Cref{lem:updatewholeblock}.
			\item Delete all of the current sink vertices from $D$.
		\end{enumerate}
	\end{enumerate}
\end{algorithm}

Recall that empty updates can always be scheduled in the first round,
even for infeasible problem instances. So
for \Cref{alg:main} and all following algorithms, we simply
assume these updates to be scheduled together with the non-empty updates of round $1$.

Figure~\ref{fig:exwithblocks} gives an example of an update flow
network on a DAG and illustrates the block decomposition and 
its value to finding a feasible update sequence.

\begin{figure*}[t]
	\begin{center}
		\begin{tikzpicture}[scale=0.8]
		\tikzset{>=latex} 
		
		\node (o) [] {};
		\node (u) [position=270:55mm from o] {};

		\node (c-1-1) [position=180:55mm from o] {};
		\node (c-1-2) [position=0:0mm from o] {};
		\node (c-1-3) [position=0:55mm from o] {};
		\node (c-2-1) [position=180:55mm from u] {};
		\node (c-2-2) [position=0:0mm from u] {};
		\node (c-2-3) [position=0:55mm from u] {};

		\foreach\i in {1,...,2} {
			\foreach\j in {1,...,3} {
				
				\node (s-\i-\j) [fill,inner sep=0pt,minimum size=7pt,draw,circle,thick,scale=0.7,position=162:17mm from c-\i-\j] {};
				
				\node (s-\i-\j-1) [inner sep=0pt,position=330:0.72mm from s-\i-\j] {};
				\node (s-\i-\j-2) [inner sep=0pt,scale=0.75,position=270:0.72mm from s-\i-\j] {};
				\node (s-\i-\j-3) [inner sep=0pt,scale=0.75,position=90:0.72mm from s-\i-\j] {};
				\node (s-\i-\j-4) [inner sep=0pt,scale=0.75,position=30:0.72mm from s-\i-\j] {};
				
				\node (t-\i-\j) [fill,inner sep=0pt,minimum size=7pt,draw,circle,thick,scale=0.7,position=18:17mm from c-\i-\j] {};
				
				\node (t-\i-\j-1) [inner sep=0pt,scale=0.75,position=270:0.72mm from t-\i-\j] {};
				\node (t-\i-\j-2) [inner sep=0pt,scale=0.75,position=210:0.72mm from t-\i-\j] {};
				\node (t-\i-\j-3) [inner sep=0pt,scale=0.75,position=160:0.72mm from t-\i-\j] {};
				\node (t-\i-\j-4) [inner sep=0pt,scale=0.75,position=90:0.72mm from t-\i-\j] {};
				
				\node (u-\i-\j) [fill,inner sep=0pt,minimum size=7pt,draw,circle,thick,scale=0.5,position=224:10mm from c-\i-\j] {};
				
				\node (u-\i-\j-1) [inner sep=0pt,scale=0.75,position=90:0.72mm from u-\i-\j] {};
				\node (u-\i-\j-2) [inner sep=0pt,scale=0.75,position=180:0.72mm from u-\i-\j] {};
				\node (u-\i-\j-3) [inner sep=0pt,scale=0.75,position=120:0.72mm from u-\i-\j] {};
				\node (u-\i-\j-4) [inner sep=0pt,scale=0.75,position=0:0.72mm from u-\i-\j] {};	
				
				\node (v-\i-\j) [fill,inner sep=0pt,minimum size=7pt,draw,circle,thick,scale=0.5,position=316:10mm from c-\i-\j] {};
				
				\node (v-\i-\j-1) [inner sep=0pt,scale=0.75,position=90:0.72mm from v-\i-\j] {};
				\node (v-\i-\j-2) [inner sep=0pt,scale=0.75,position=180:0.72mm from v-\i-\j] {};
				\node (v-\i-\j-3) [inner sep=0pt,scale=0.75,position=60:0.72mm from v-\i-\j] {};
				\node (v-\i-\j-4) [inner sep=0pt,scale=0.75,position=0:0.72mm from v-\i-\j] {};		
				
				\node (w-\i-\j) [fill,inner sep=0pt,minimum size=7pt,draw,circle,thick,scale=0.5,position=90:15mm from c-\i-\j] {};
				
				\node (w-\i-\j-1) [inner sep=0pt,scale=0.75,position=320:0.72mm from w-\i-\j] {};
				\node (w-\i-\j-2) [inner sep=0pt,scale=0.75,position=240:0.72mm from w-\i-\j] {};
				\node (w-\i-\j-3) [inner sep=0pt,scale=0.75,position=180:0.72mm from w-\i-\j] {};
				\node (w-\i-\j-4) [inner sep=0pt,scale=0.75,position=30:0.72mm from w-\i-\j] {};	
				\node (w-\i-\j-5) [inner sep=0pt,scale=0.75,position=270:0.72mm from w-\i-\j] {};
				\node (w-\i-\j-6) [inner sep=0pt,scale=0.75,position=270:0.72mm from w-\i-\j] {};
				
				\pgfmathtruncatemacro{\ilabel}{\j+3*(\i-1)};
				
				\node (labels-\i-\j) [position=150:4mm from s-\i-\j] {$s$};	
				\node (labelt-\i-\j) [position=30:4mm from t-\i-\j] {$t$};
				\node (labelt-\i-\j) [position=270:4mm from u-\i-\j] {$u$};
				\node (labelt-\i-\j) [position=270:4mm from v-\i-\j] {$v$};
				\node (labelt-\i-\j) [position=90:4mm from w-\i-\j] {$w$};
				
			}	
		}

		\node (labelf-1-1) [position=270:25mm from c-1-1] {initial network $G$};
		\node (dependency) [position=270:19mm from c-1-2] {\textcolor{blue}{$b_2$}$\rightarrow$\textcolor{red}{$r_1$}$\rightarrow$\textcolor{blue}{$b_1$}};	
		\node (labelf-1-2) [position=270:25mm from c-1-2] {block dependency};
		\node (labelf-1-3) [position=270:25mm from c-1-3] {prepare blocks};
		\node (labelf-2-1) [position=270:19mm from c-2-1] {update $b_1$};
		\node (labelf-2-2) [position=270:19mm from c-2-2] {update $r_1$};
		\node (labelf-2-3) [position=270:19mm from c-2-3] {update $b_2$};
		
		\begin{pgfonlayer}{bg}    % select the background layer
		
		\draw [cyan,line width=6.5pt,line cap=round,opacity=0.2] (s-1-1) to (w-1-1);
		\draw [cyan,line width=6.5pt,line cap=round,opacity=0.2] (w-1-1) to (t-1-1);
		
		\draw [Magenta,line width=6.5pt,line cap=round,opacity=0.2] (s-1-1) to (u-1-1);
		\draw [Magenta,line width=6.5pt,line cap=round,opacity=0.2] (u-1-1) to (v-1-1);
		\draw [Magenta,line width=6.5pt,line cap=round,opacity=0.2] (v-1-1) to (t-1-1);
		
		\draw [blue,line width=1.3pt,->] (s-1-1-4) to (w-1-1-2);
		\draw [blue,line width=1.3pt,->] (w-1-1-1) to (t-1-1-3);
		
		\draw [red,line width=1.3pt,->] (s-1-1-1) to (u-1-1-3);
		\draw [red,line width=1.3pt,->] (u-1-1-4) to (v-1-1-2);
		\draw [red,line width=1.3pt,->] (v-1-1-3) to (t-1-1-2);
		
		\draw [blue,line width=1.3pt,->,dashed] (s-1-1-2) to (u-1-1-2);
		\draw [blue,line width=1.3pt,->,dashed] (u-1-1-1) to (w-1-1-6);
		\draw [blue,line width=1.3pt,->,dashed] (w-1-1-5) to (v-1-1-1);
		\draw [blue,line width=1.3pt,->,dashed] (v-1-1-4) to (t-1-1-1);
		
		\draw [red,line width=1.3pt,->,dashed] (s-1-1-3) to (w-1-1-3);
		\draw [red,line width=1.3pt,->,dashed] (w-1-1-4) to (t-1-1-4);
		
		\node (l-sw-1-1) [position=130:17.5mm from c-1-1] {$\frac{1}{1}$};
		\node (l-wt-1-1) [position=50:17.5mm from c-1-1] {$\frac{1}{2}$};
		\node (l-uw-1-1) [position=160:7mm from c-1-1] {$\frac{0}{1}$};
		\node (l-wv-1-1) [position=20:7mm from c-1-1] {$\frac{0}{1}$};
		\node (l-uv-1-1) [position=270:12mm from c-1-1] {$\frac{1}{1}$};
		\node (l-su-1-1) [position=200:15mm from c-1-1] {$\frac{1}{2}$};
		\node (l-vt-1-1) [position=340:15mm from c-1-1] {$\frac{1}{1}$};

		\draw [Dandelion,line width=6.5pt,line cap=round,opacity=0.4] (s-1-2) to (w-1-2);
		\draw [Dandelion,line width=6.5pt,line cap=round,opacity=0.4] (v-1-2) to (t-1-2);
		
		\draw [blue,line width=1pt,->,decoration = {zigzag,segment length = 2mm, amplitude = 0.3mm}, decorate] (s-1-2-4) to (w-1-2-2);
		\draw [blue,line width=1pt,->] (w-1-2-1) to (t-1-2-3);
		
		\draw [red,line width=1pt,->] (s-1-2-1) to (u-1-2-3);
		\draw [red,line width=1pt,->] (u-1-2-4) to (v-1-2-2);
		\draw [red,line width=1pt,->] (v-1-2-3) to (t-1-2-2);
		
		\draw [blue,line width=1pt,->,dashed,decoration = {zigzag,segment length = 2mm, amplitude = 0.3mm}, decorate] (s-1-2-2) to (u-1-2-2);
		\draw [blue,line width=1pt,->,dashed,decoration = {zigzag,segment length = 2mm, amplitude = 0.3mm}, decorate] (u-1-2-1) to (w-1-2-6);
		\draw [blue,line width=1pt,->,dashed] (w-1-2-5) to (v-1-2-1);
		\draw [blue,line width=1pt,->,dashed] (v-1-2-4) to (t-1-2-1);
		
		\draw [red,line width=1pt,->,dashed] (s-1-2-3) to (w-1-2-3);
		\draw [red,line width=1pt,->,dashed] (w-1-2-4) to (t-1-2-4);
		
		\node (l-sw-1-2) [position=130:17.5mm from c-1-2] {$\frac{1}{1}$};
		\node (l-uw-1-2) [position=160:8mm from c-1-2] {$b_1$};
		\node (l-wv-1-2) [position=20:8mm from c-1-2] {$b_2$};
		\node (l-uv-1-2) [position=270:12mm from c-1-2] {$r_1$};
		\node (l-vt-1-2) [position=340:15mm from c-1-2] {$\frac{1}{1}$};

		\draw [cyan,line width=6.5pt,line cap=round,opacity=0.2] (s-1-3) to (w-1-3);
		\draw [cyan,line width=6.5pt,line cap=round,opacity=0.2] (w-1-3) to (t-1-3);
		
		\draw [Magenta,line width=6.5pt,line cap=round,opacity=0.2] (s-1-3) to (u-1-3);
		\draw [Magenta,line width=6.5pt,line cap=round,opacity=0.2] (u-1-3) to (v-1-3);
		\draw [Magenta,line width=6.5pt,line cap=round,opacity=0.2] (v-1-3) to (t-1-3);
		
		\draw [blue,line width=1.3pt,->] (s-1-3-4) to (w-1-3-2);
		\draw [blue,line width=1.3pt,->] (w-1-3-1) to (t-1-3-3);
		
		\draw [red,line width=1.3pt,->] (s-1-3-1) to (u-1-3-3);
		\draw [red,line width=1.3pt,->] (u-1-3-4) to (v-1-3-2);
		\draw [red,line width=1.3pt,->] (v-1-3-3) to (t-1-3-2);
		
		\draw [blue,line width=1.3pt,->,dashed] (s-1-3-2) to (u-1-3-2);
		\draw [blue,line width=1.3pt,->] (u-1-3-1) to (w-1-3-6);
		\draw [blue,line width=1.3pt,->,dashed] (w-1-3-5) to (v-1-3-1);
		\draw [blue,line width=1.3pt,->] (v-1-3-4) to (t-1-3-1);
		
		\draw [red,line width=1.3pt,->,dashed] (s-1-3-3) to (w-1-3-3);
		\draw [red,line width=1.3pt,->] (w-1-3-4) to (t-1-3-4);
		
		\node (l-sw-1-3) [position=130:17.5mm from c-1-3] {$\frac{1}{1}$};
		\node (l-wt-1-3) [position=50:17.5mm from c-1-3] {$\frac{1}{2}$};
		\node (l-uw-1-3) [position=160:7mm from c-1-3] {$\frac{0}{1}$};
		\node (l-wv-1-3) [position=20:7mm from c-1-3] {$\frac{0}{1}$};
		\node (l-uv-1-3) [position=270:12mm from c-1-3] {$\frac{1}{1}$};
		\node (l-su-1-3) [position=200:15mm from c-1-3] {$\frac{1}{2}$};
		\node (l-vt-1-3) [position=340:15mm from c-1-3] {$\frac{1}{1}$};

		\draw [cyan,line width=6.5pt,line cap=round,opacity=0.2] (s-2-1) to (u-2-1);
		\draw [cyan,line width=6.5pt,line cap=round,opacity=0.2] (u-2-1) to (w-2-1);
		\draw [cyan,line width=6.5pt,line cap=round,opacity=0.2] (w-2-1) to (t-2-1);
		
		\draw [Magenta,line width=6.5pt,line cap=round,opacity=0.2] (s-2-1) to (u-2-1);
		\draw [Magenta,line width=6.5pt,line cap=round,opacity=0.2] (u-2-1) to (v-2-1);
		\draw [Magenta,line width=6.5pt,line cap=round,opacity=0.2] (v-2-1) to (t-2-1);
		
		\draw [blue,line width=0.9pt,->,dashed] (s-2-1-4) to (w-2-1-2);
		\draw [blue,line width=1.3pt,->] (w-2-1-1) to (t-2-1-3);
		
		\draw [red,line width=1.3pt,->] (s-2-1-1) to (u-2-1-3);
		\draw [red,line width=1.3pt,->] (u-2-1-4) to (v-2-1-2);
		\draw [red,line width=1.3pt,->] (v-2-1-3) to (t-2-1-2);
		
		\draw [blue,line width=1.3pt,->] (s-2-1-2) to (u-2-1-2);
		\draw [blue,line width=1.3pt,->] (u-2-1-1) to (w-2-1-6);
		\draw [blue,line width=1.3pt,->,dashed] (w-2-1-5) to (v-2-1-1);
		\draw [blue,line width=1.3pt,->] (v-2-1-4) to (t-2-1-1);
		
		\draw [red,line width=1.3pt,->,dashed] (s-2-1-3) to (w-2-1-3);
		\draw [red,line width=1.3pt,->] (w-2-1-4) to (t-2-1-4);
		
		\node (l-sw-2-1) [position=130:17.5mm from c-2-1] {$\frac{0}{1}$};
		\node (l-wt-2-1) [position=50:17.5mm from c-2-1] {$\frac{1}{2}$};
		\node (l-uw-2-1) [position=160:7mm from c-2-1] {$\frac{1}{1}$};
		\node (l-wv-2-1) [position=20:7mm from c-2-1] {$\frac{0}{1}$};
		\node (l-uv-2-1) [position=270:12mm from c-2-1] {$\frac{1}{1}$};
		\node (l-su-2-1) [position=200:15mm from c-2-1] {$\frac{2}{2}$};
		\node (l-vt-2-1) [position=340:15mm from c-2-1] {$\frac{1}{1}$};

		\draw [cyan,line width=6.5pt,line cap=round,opacity=0.2] (s-2-2) to (u-2-2);
		\draw [cyan,line width=6.5pt,line cap=round,opacity=0.2] (u-2-2) to (w-2-2);
		\draw [cyan,line width=6.5pt,line cap=round,opacity=0.2] (w-2-2) to (t-2-2);
		
		\draw [Magenta,line width=6.5pt,line cap=round,opacity=0.2] (s-2-2) to (w-2-2);
		\draw [Magenta,line width=6.5pt,line cap=round,opacity=0.2] (w-2-2) to (t-2-2);
		
		\draw [blue,line width=0.9pt,->,dashed] (s-2-2-4) to (w-2-2-2);
		\draw [blue,line width=1.3pt,->] (w-2-2-1) to (t-2-2-3);
		
		\draw [red,line width=0.9pt,->,dashed] (s-2-2-1) to (u-2-2-3);
		\draw [red,line width=1.3pt,->] (u-2-2-4) to (v-2-2-2);
		\draw [red,line width=1.3pt,->] (v-2-2-3) to (t-2-2-2);
		
		\draw [blue,line width=1.3pt,->] (s-2-2-2) to (u-2-2-2);
		\draw [blue,line width=1.3pt,->] (u-2-2-1) to (w-2-2-6);
		\draw [blue,line width=1.3pt,->,dashed] (w-2-2-5) to (v-2-2-1);
		\draw [blue,line width=1.3pt,->] (v-2-2-4) to (t-2-2-1);
		
		\draw [red,line width=1.3pt,->] (s-2-2-3) to (w-2-2-3);
		\draw [red,line width=1.3pt,->] (w-2-2-4) to (t-2-2-4);
		
		\node (l-sw-2-2) [position=130:17.5mm from c-2-2] {$\frac{1}{1}$};
		\node (l-wt-2-2) [position=50:17.5mm from c-2-2] {$\frac{2}{2}$};
		\node (l-uw-2-2) [position=160:7mm from c-2-2] {$\frac{1}{1}$};
		\node (l-wv-2-2) [position=20:7mm from c-2-2] {$\frac{0}{1}$};
		\node (l-uv-2-2) [position=270:12mm from c-2-2] {$\frac{0}{1}$};
		\node (l-su-2-2) [position=200:15mm from c-2-2] {$\frac{1}{2}$};
		\node (l-vt-2-2) [position=340:15mm from c-2-2] {$\frac{0}{1}$};

		\draw [cyan,line width=6.5pt,line cap=round,opacity=0.2] (s-2-3) to (u-2-3);
		\draw [cyan,line width=6.5pt,line cap=round,opacity=0.2] (u-2-3) to (w-2-3);
		\draw [cyan,line width=6.5pt,line cap=round,opacity=0.2] (w-2-3) to (v-2-3);
		\draw [cyan,line width=6.5pt,line cap=round,opacity=0.2] (v-2-3) to (t-2-3);
		
		\draw [Magenta,line width=6.5pt,line cap=round,opacity=0.2] (s-2-3) to (w-2-3);
		\draw [Magenta,line width=6.5pt,line cap=round,opacity=0.2] (w-2-3) to (t-2-3);
		
		\draw [blue,line width=0.9pt,->,dashed] (s-2-3-4) to (w-2-3-2);
		\draw [blue,line width=0.9pt,->,dashed] (w-2-3-1) to (t-2-3-3);
		
		\draw [red,line width=0.9pt,->,dashed] (s-2-3-1) to (u-2-3-3);
		\draw [red,line width=0.9pt,->,dashed] (u-2-3-4) to (v-2-3-2);
		\draw [red,line width=0.9pt,->,dashed] (v-2-3-3) to (t-2-3-2);
		
		\draw [blue,line width=1.3pt,->] (s-2-3-2) to (u-2-3-2);
		\draw [blue,line width=1.3pt,->] (u-2-3-1) to (w-2-3-6);
		\draw [blue,line width=1.3pt,->] (w-2-3-5) to (v-2-3-1);
		\draw [blue,line width=1.3pt,->] (v-2-3-4) to (t-2-3-1);
		
		\draw [red,line width=1.3pt,->] (s-2-3-3) to (w-2-3-3);
		\draw [red,line width=1.3pt,->] (w-2-3-4) to (t-2-3-4);
		
		\node (l-sw-2-3) [position=130:17.5mm from c-2-3] {$\frac{1}{1}$};
		\node (l-wt-2-3) [position=50:17.5mm from c-2-3] {$\frac{1}{2}$};
		\node (l-uw-2-3) [position=160:7mm from c-2-3] {$\frac{1}{1}$};
		\node (l-wv-2-3) [position=20:7mm from c-2-3] {$\frac{1}{1}$};
		\node (l-uv-2-3) [position=270:12mm from c-2-3] {$\frac{0}{1}$};
		\node (l-su-2-3) [position=200:15mm from c-2-3] {$\frac{1}{2}$};
		\node (l-vt-2-3) [position=340:15mm from c-2-3] {$\frac{1}{1}$};
		\end{pgfonlayer}

		\end{tikzpicture}	
	\end{center}
	\vspace{-1em}
	\caption{\emph{Example} for \Cref{alg:main}. The $2$ update flow pairs are \textcolor{red}{red} and 
		\textcolor{blue}{blue},
		each of demand $1$. 
		The active edges of the respective colors are indicated as \emph{solid lines} and the inactive edges are \emph{dashed}. 
		Each edge in the flow graph is annotated with its current
		load (\emph{top}) and its capacity (\emph{bottom}).
		We start by identifying the \textcolor{blue}{blue} and \textcolor{red}{red} blocks. For \textcolor{red}{red} there is exactly one such block \textcolor{red}{$r_1$}, since \textcolor{red}{$R^o$} and \textcolor{red}{$R^u$} only coincide in $s$ and $t$. The \textcolor{blue}{blue} flow pair on the other hand omits two blocks \textcolor{blue}{$b_1$} and \textcolor{blue}{$b_2$}: \textcolor{blue}{$B^o$} and \textcolor{blue}{$B^u$} meet again at $w$ and at $t$.
		We observe that $\textcolor{blue}{b_2}$
		can only be updated after $\textcolor{red}{r_1}$ 
		has been updated; similarly, $\textcolor{red}{r_1}$
		can only be updated after $\textcolor{blue}{b_1}$
		has been updated.
		An update sequence respecting these dependencies
		can be constructed as follows.
		We can first prepare the blocks by updating
		the following two out-edges which currently
		do not carry any flow:
		$(w,\textcolor{red}{red})$, 
		$(u,\textcolor{blue}{blue})$, and $(v,\textcolor{blue}{blue})$.
		Subsequently, the three blocks can be updated in a congestion-free
		manner in the following order:
		Prepare the update for all blocks in the first round. Then, update
		$\textcolor{blue}{b_1}$ in the second round,
		$\textcolor{red}{r_1}$ in the third round,
		$\textcolor{blue}{b_2}$ in the fourth round.
	} 
	\label{fig:exwithblocks}
\end{figure*}

Suppose $\updatesequence$ is a feasible update sequence for $G$. We say 
that a $c$-block $b$
w.r.t.~$\updatesequence=(\round_1,\ldots,\round_\ell)$ is \emph{updated in
	consecutive rounds}, if the following holds:
if some of the edges of $b$ 
are activated/deactivated in
round $i$ and some others in round $j$, 
then for every $i<k<j$, there is
an edge of $b$ which is activated/deactivated. 

\begin{lemma}\label[lemma]{lem:updateblockstart}
	Let $b$ be a $c$-block. Then in
	a feasible update sequence $\updatesequence$, all vertices
	(resp.~their outgoing $c$ flow edges)
	in $F^u_c\cap b - \Start{b}$ are
	updated  strictly before $\Start{b}$. Moreover, all vertices in $b-F^u_c$ are updated strictly
	after $\Start{b}$ is updated.
\end{lemma}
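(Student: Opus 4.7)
The plan is to handle both halves by contradiction, engineering an intermediate update set $U$---supplied by the consistency rule---in which the alleged order-violating vertex breaks the unique transient $(s,t)$-flow $T_{P_c,U}$ that must exist. The structural input I would rely on is that, inside a $c$-block $b$, the two paths $F^o_c$ and $F^u_c$ share only $\Start{b}$ and $\End{b}$; every other block vertex lies on exactly one of the two, which pins down precisely when its outgoing $c$-edge can be active.

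For the first statement, suppose $v\in F^u_c\cap b\setminus\{\Start{b}\}$ is updated no earlier than $\Start{b}$, and let $i=\updatesequence(\Start{b},P_c)$. I would apply the consistency rule to a subset $S\subseteq\round_i$ that contains $\Start{b}$ but not $v$, yielding $U=S\cup\bigcup_{j<i}\round_j$. In state $U$, the active outgoing $c$-edge at $\Start{b}$ is the new one, so the unique transient path $T_{P_c,U}$ must leave $\Start{b}$ along $F^u_c$; walking along $F^u_c$ inside $b$, the first internal $F^u_c$-vertex not in $U$---which exists because $v\notin U$---has its only outgoing $c$-edge inactive (it does not lie on $F^o_c$), so the transient path cannot reach $t$, a contradiction. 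The second half is symmetric: for $v\in b-F^u_c$ updated no later than $\Start{b}$, I pick $S\subseteq\round_i$ containing $v$ but not $\Start{b}$ (or $S=\emptyset$ if $v$ was already resolved in an earlier round). Then the old outgoing $c$-edge at $\Start{b}$ is still active and $T_{P_c,U}$ must trace $F^o_c$ into $b$, but the first updated $F^o_c$-only vertex (guaranteed by $v\in U$) has its old edge deactivated and no $F^u_c$-edge at all, so the path stalls inside $b$.

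The main obstacle I expect is the asynchrony \emph{within} the round containing $\Start{b}$'s update: since the individual updates of $\round_i$ may be applied in any order, the mere fact that $\Start{b}$ and $v$ share a round does not by itself fix a state to analyze. This is exactly what the consistency rule's quantification over \emph{all} subsets $S\subseteq\round_i$ is for---it lets me pick the adversarial subset that isolates the offending vertex and produces the broken transient flow. Once the right $S$ is fixed, the rest is routine bookkeeping on active $c$-edges using the clean bipartition of the strictly internal block vertices between $F^o_c$ and $F^u_c$.
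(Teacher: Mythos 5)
Your proof is correct and follows essentially the same route as the paper's: both hinge on the fact that an internal block vertex lies on exactly one of $F^o_c$ and $F^u_c$, so that once $\Start{b}$ is switched the transient flow is forced onto the new path and stalls at any not-yet-updated internal vertex, while a prematurely updated old-path vertex kills the flow from the other side. If anything, your explicit choice of the subset $S\subseteq\round_i$ in the consistency rule handles the case where the offending vertex shares a round with $\Start{b}$ more carefully than the paper, which phrases its contradiction via non-uniqueness of the transient flow at the moment the last lagging vertex is updated and treats the simultaneous case only in passing.
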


\begin{proof}
	In the following, we will implicitly assume
	flow $c$, and will not mention 
	it explicitly everywhere.
	We will write $F^u_b$ for $F^u_c\cap b$ and
	$F^o_b$
	for $F^o_c\cap b$.
	For the sake of contradiction, let $U=\{v\in V(G)\mid v\in F^u_b-F^o_b-\Start{b},\updatesequence(v,c) > \updatesequence(\Start{b},c)\}$.
	Moreover, let $v$ be the
	vertex of $U$ which is updated the latest
	and $\updatesequence(v,c) = \max_{u\in U}\updatesequence(u,c)$.
	By our condition, the update of $v$ enables 
	a transient flow along
	edges in $F^u_c\cap b$. Hence, 
	there now exists an $(s,t)$-flow through $b$ using only update edges.
	
	No vertex in $F_1\coloneqq F^o_b-(F^u_b-\Start{b})$ 
	could have been updated before, or
	simultaneously with $v$: 
	otherwise, between the time $u$ has been updated
	and before the update
	of $v$, there would not exist a transient flow.
	But once we update $v$, there is a $c$-flow which 
	traverses the
	vertices in $F^o_b-F^u_b$, and another $c$-flow which 
	traverses
	$v\not\in F_1$: a contradiction. Note that $F_1\neq \emptyset$.
	The other direction is obvious: updating
	any vertex in $(F^o_c\cap b)-F^u_c$ inhibits any transient flow.
\end{proof}

\begin{lemma}\label[lemma]{lem:updatewholeblock}
	Given any feasible (not necessarily shortest)
	update sequence $\updatesequence$, there is a feasible 
	update sequence $\updatesequence'$  which updates every block in at most 
	$3$ consecutive rounds.
\end{lemma}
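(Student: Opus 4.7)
The plan is to refine the given feasible schedule $\updatesequence$ into a new schedule $\updatesequence'$ by \emph{tripling} each round, dedicating one sub-round to each of the three phases of block updates identified by Lemma~\ref{lem:updateblockstart}. For every block $b$ (of some pair $c$), partition its vertex updates into a preparation set $\mathrm{prep}(b)=\{(v,c) : v\in F^u_c\cap b-\Start{b}\}$, the switch $\mathrm{switch}(b)=\{(\Start{b},c)\}$, and a cleanup set $\mathrm{clean}(b)=\{(v,c) : v\in b-F^u_c\}$, and let $r(b)$ be the round of $\updatesequence$ in which $\Start{b}$ is resolved. Lemma~\ref{lem:updateblockstart} guarantees that in $\updatesequence$, $\mathrm{prep}(b)$ is scheduled strictly before round $r(b)$ and $\mathrm{clean}(b)$ strictly after, with $\mathrm{switch}(b)$ falling exactly in round $r(b)$. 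I then define $\updatesequence'=(\round'_1,\dots,\round'_{3\ell})$ by
\[
\round'_{3i-2}=\bigcup_{b:r(b)=i}\mathrm{prep}(b),\quad \round'_{3i-1}=\bigcup_{b:r(b)=i}\mathrm{switch}(b),\quad \round'_{3i}=\bigcup_{b:r(b)=i}\mathrm{clean}(b),
\]
so that every block's updates reside in the three consecutive rounds $\round'_{3r(b)-2},\round'_{3r(b)-1},\round'_{3r(b)}$. It remains to verify that $\updatesequence'$ is feasible.

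The core observation is that neither preparation nor cleanup updates alter any pair's family of transient $(s,t)$-flows. A preparation update $(v,c)$ for $v\in F^u_c\cap b-\Start{b}$ only activates a new-flow edge whose tail $v$ is still unreachable from $s$ via active $c$-edges until $\Start{b}$ is switched: the only incoming edge of $v$ in $F^u_c$ traces back within $F^u_c\cap b$ to $\Start{b}$, which has not yet been updated. The activated edge is therefore a dangling branch, contributing no $(s,t)$-path and no load to the transient flow of $c$, and has no effect on other pairs. Dually, a cleanup update $(v,c)$ for $v\in b-F^u_c$ deactivates an old-flow edge emanating from a vertex $v$ that, once $\Start{b}$ has been switched, no longer carries the transient flow of $c$. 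Consequently the family of transient flows depends only on which switches have been resolved, and the state of $\updatesequence'$ at the end of $\round'_{3i}$ reproduces exactly the (valid) end-of-$\round_i$ state of $\updatesequence$.

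The main remaining obstacle is verifying the consistency rule for every subset of each of the three sub-rounds. For $\round'_{3i-2}$ and $\round'_{3i}$ this is straightforward: any subset of $\round'_{3i-2}$ only activates further dangling edges, leaving every transient flow equal to that at the end of $\round_{i-1}$ of $\updatesequence$ (valid by hypothesis), with uniqueness preserved since the activated edges remain disconnected from $s$ in every pair's active-edge subgraph; any subset of $\round'_{3i}$ merely deactivates edges already off the transient flows, so the valid family of transient flows from the end of $\round'_{3i-1}$ carries over. The delicate case is $\round'_{3i-1}$: any $S\subseteq\round'_{3i-1}$ is, by definition of $r(b)$, a collection of switch updates contained in $\round_i$ of $\updatesequence$, so $\mathcal{U}_{i-1}\cup S$ is valid by consistency of $\updatesequence$; since the switches resolved in $\mathcal{U}_{i-1}\cup S$ coincide exactly with those resolved in the corresponding $\updatesequence'$-state, the two states induce identical transient flows and therefore identical capacity loads. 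This yields validity of $\updatesequence'$ and completes the proof.
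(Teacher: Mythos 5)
Your proof is correct, but it takes a genuinely different route from the paper's. The paper argues by a minimal-counterexample exchange: it takes a feasible sequence with the fewest blocks violating the $3$-consecutive-round property, picks one offending block $b$ whose start is switched in round $i$, and shows (via \Cref{lem:updateblockstart}) that all of $b$'s preparation updates can be pulled into round $i-1$ and all of its cleanup updates pushed into round $i+1$ without breaking feasibility, contradicting minimality. You instead perform a global reconstruction, tripling every round of $\updatesequence$ into a prep/switch/clean triple keyed to the round in which each block's start is switched. Both arguments rest on the same two pillars --- \Cref{lem:updateblockstart} and the observation that preparation and cleanup updates touch only edges that are dangling (unreachable from $s$, respectively already bypassed) in the relevant pair's active graph and therefore never alter any transient flow or any edge load. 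What your version buys is explicitness: you actually verify the consistency rule for arbitrary subsets of each new round, including the delicate switch sub-round, by matching resolved switch sets against states of the original sequence --- a point the paper's proof passes over with ``since they do not carry any flow during these rounds.'' What it costs is the round count, which balloons to $3\ell$ (with some empty sub-rounds to prune), whereas the paper's local surgery stays close to the original length; since the lemma only asserts existence of \emph{some} feasible sequence with the consecutiveness property, this is harmless here. Two small inherited blemishes, present in the paper as well and not specific to your argument: the membership of $\End{b}$ (which coincides with the start of the next block) must be excluded from $\mathrm{prep}(b)$ to keep your three sets a genuine partition, and empty updates need the paper's standing convention of being scheduled in round~$1$.
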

\begin{proof}
Consider the following approach to 
update free blocks
(for the $i$-block $b$): 
first resolve $(v,P_i)$ for all $v\in P^u_i \cap b-\Start{b}$;
then resolve $(\Start{b},P_i)$; finally
resolve $(v,P_i)$ for all $v\in (b-P^u_i)$.

Now let $\updatesequence$ be a feasible update sequence with a minimum number of 
blocks which are not updated in~$3$ consecutive rounds. 
Furthermore let $b$ be such a $c$-block. Let $i$ 
be the round in which $\Start{b}$ is updated. Then
by~\Cref{lem:updateblockstart},
all other vertices of
$F^u_c\cap b$ have been updated 
in the previous rounds. Moreover, since they do not carry
any flow during these rounds, the edges can all be updated in round $i-1$. 
By our assumption, we can 
update $\Start{b}$ in round $i$, and hence now 
this is still possible.
	
As $\Start{b}$ is updated in round~$i$, the edges of 
$F^o_c\cap b$ do not carry any active $c$-flow in 
round~$i+1$ and thus we can deactivate all remaining 
such edges in this round. This is a contradiction to the 
choice of $\updatesequence$, and hence there is always a feasible 
sequence $\updatesequence$ satisfying the requirements of the 
lemma. 
	
In particular, the above algorithm is correct.
\end{proof}

From the above lemmas, we immediately 
derive a corollary regarding the optimality in terms 
of the number of rounds: 
the~$3$ rounds feasible update sequence.

\begin{corollary}\label{cor:blockroundoptimal}
	Let $b$ be any $c$-block with 
	$\left| E(b\cap F_c^o)\right|\geq 2$ and 
	$\left| E(b\cap F_c^u)\right|\geq 2$. Then it is 
	not possible to update $b$ in less than~$3$ 
	rounds: otherwise it is not possible to 
	update~$b$ in less than~$2$ rounds.	
\end{corollary}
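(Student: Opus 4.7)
The plan is to apply \Cref{lem:updateblockstart} to force a temporal separation between three groups of non-empty updates inside $b$, and then argue that the hypotheses guarantee that each group is non-empty.

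First I would invoke \Cref{lem:updateblockstart} to decompose the non-empty updates of $b$ in any feasible schedule into three groups, ordered by the round $\updatesequence(\Start{b},c)$ in which $\Start{b}$ is resolved: (i) updates for vertices in $F^u_c\cap b-\Start{b}$, which all occur strictly earlier; (ii) the update $(\Start{b},c)$ itself; and (iii) updates for vertices in $b-F^u_c$, which all occur strictly later. The number of rounds spanning $b$ is therefore at least $1$ plus the number of groups among (i) and (iii) that contain at least one non-empty update.

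Second, I would show that the hypothesis $|E(b\cap F^u_c)|\geq 2$ forces group (i) to be non-empty. Because blocks are delimited by the meeting vertices in $V(F^o_c)\cap V(F^u_c)$, the interior vertices of the new and old subpaths through $b$ are disjoint. Hence if $|E(b\cap F^u_c)|\geq 2$ there exists an interior vertex $v$ of the new subpath, necessarily with $v\notin F^o_c$, whose outgoing new edge is currently inactive; the update $(v,P_c)$ is thus non-empty and, by \Cref{lem:updateblockstart}, resolved strictly before $\Start{b}$. A symmetric argument using $|E(b\cap F^o_c)|\geq 2$ produces an interior vertex $w$ of the old subpath with $w\notin F^u_c$, whose non-empty deactivation belongs to group (iii).

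Combining the two steps yields the corollary: when both edge-count hypotheses hold all three groups are non-empty, so at least $3$ rounds are necessary, while when only one holds we still have $\Start{b}$'s update plus one non-empty group, giving at least $2$ rounds. The only delicate point I anticipate is the bookkeeping to verify that the witnesses $v,w$ are strictly internal to $b$ (rather than $\Start{b}$ or $\End{b}$), so that their updates are genuinely non-empty; this is immediate from the definition of blocks as subgraphs between consecutive meeting vertices of $V(F^o_c)\cap V(F^u_c)$. Matching upper bounds are already supplied by the explicit schedule constructed in the proof of \Cref{lem:updatewholeblock}, confirming that the bound is tight.
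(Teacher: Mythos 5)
Your proposal is correct and follows exactly the route the paper intends: the paper states this corollary as an immediate consequence of \Cref{lem:updateblockstart} (and the schedule from \Cref{lem:updatewholeblock} for tightness) without writing out a proof, and your three-group decomposition with the non-emptiness witnesses is precisely the spelled-out version of that derivation. No gaps.
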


Next we show that if there is a cycle in the 
dependency graph, then
it is impossible to update any flow.

\begin{lemma}\label[lemma]{lem:nocycle}
	If there is a cycle in the dependency graph, then 
	there is no feasible update sequence.
\end{lemma}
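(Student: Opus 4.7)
\medskip

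\noindent\textbf{Proof plan.} The plan is to argue by contradiction: assume a feasible update sequence $\updatesequence$ exists, and suppose the dependency graph $D$ contains a cycle $b_1\to b_2\to\dots\to b_k\to b_1$. The whole proof reduces to establishing a single local fact: along every dependency edge $b_1\to b_2$ of $D$, the start of $b_2$ must be updated in a strictly earlier round than the start of $b_1$, i.e.\ $\updatesequence(b_2)<\updatesequence(b_1)$. Chaining this strict inequality around the cycle then yields $\updatesequence(b_1)<\updatesequence(b_k)<\dots<\updatesequence(b_2)<\updatesequence(b_1)$, the desired contradiction.

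To prove the local fact, fix a dependency edge $b_1\to b_2$ where $b_1$ is an $I$-block, $b_2$ is a $J$-block, and $e\in E(b_1)\cap E(I^u)\cap E(b_2)\cap E(J^o)$ is the congested edge with $c(e)<d_I+d_J$. I first want to pin down exactly in which rounds the edge $e$ is used by the transient flows $T_{I,U}$ and $T_{J,U}$. By \Cref{lem:updateblockstart} applied to $b_1$, every vertex of $F_I^u\cap b_1-\Start{b_1}$ has been resolved strictly before $\Start{b_1}$, and every vertex of $b_1-F_I^u$ strictly after. This means that prior to round $\updatesequence(\Start{b_1},I)$ the outgoing $I$-edge of $\Start{b_1}$ is still the old one, so the unique transient $(s,t)$-flow $T_{I,U}$ follows $F_I^o$ through $b_1$ and does not use $e$; from round $\updatesequence(\Start{b_1},I)$ onwards the outgoing $I$-edge of $\Start{b_1}$ is new, all subsequent new vertices of $b_1$ are already active, and so the unique transient $I$-flow is forced along the new path through $b_1$ and hence uses $e$. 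By the symmetric analysis for $b_2$, the transient $J$-flow uses $e$ (as an old-path edge of $b_2$) precisely while $\updatesequence(\Start{b_2},J)$ has not yet been resolved.

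The key step is then the congestion argument. Suppose, for contradiction, that $\updatesequence(\Start{b_2},J)\ge\updatesequence(\Start{b_1},I)=:r$. I would invoke the consistency rule at round $r$ with the specific subset $S=\round_r\cap\{(\Start{b_1},I)\}$, so that $\mathcal{U}_r^S$ contains $(\Start{b_1},I)$ but not $(\Start{b_2},J)$ (this is where the asynchronous definition of a round is essential, since it forces us to check every such subset, not just the full round). By the previous paragraph $T_I$ now uses $e$ while $T_J$ still uses $e$, so the load on $e$ is at least $d_I+d_J>c(e)$, contradicting that $\mathcal{U}_r^S$ admits a \emph{valid} transient family. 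Hence $\updatesequence(\Start{b_2},J)<\updatesequence(\Start{b_1},I)$ strictly, which is exactly the local fact above.

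The main obstacle I anticipate is the bookkeeping in the second paragraph: one has to argue that as long as $\Start{b_2}$ is unresolved and $\Start{b_1}$ is resolved, the transient flows are actually \emph{unique} and forced onto the paths that use $e$, despite the possible presence of other active edges in the block interiors. \Cref{lem:updateblockstart} is designed exactly to handle this (the intermediate new edges of $b_2$ are disconnected from $s$ as long as $\Start{b_2}$'s outgoing edge is still the old one, and symmetrically on the $b_1$ side), so this is really just a careful application of that lemma combined with acyclicity of each flow pair. Once this is in place, the cycle argument is a one-line contradiction.
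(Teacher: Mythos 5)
Your proof is correct and takes essentially the same route as the paper: a cycle in the dependency graph is refuted by observing that each dependency edge forces a strict ordering of the rounds in which the blocks' starts are updated, and chaining this strict inequality around the cycle. In fact your write-up is more rigorous than the paper's own proof, which merely asserts that the "earliest" block of the cycle cannot exist "due to dependency on other vertices", whereas you explicitly derive $\updatesequence(b_2)<\updatesequence(b_1)$ from \Cref{lem:updateblockstart}, the capacity bound $c(e)<d_I+d_J$ on the shared edge, and the subset form of the consistency rule needed to rule out the two starts being updated in the same round.
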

\begin{proof} 
Suppose that there exists a cycle in the dependency graph.
Without loss of generality, we can assume that this is the only
cycle in the dependency graph as
we can always remove vertices without creating new dependencies. 
Then it is not possible to update the cycle.
For the sake of contradiction, suppose 
that there is a feasible update order; then there is a feasible 
update order in which blocks are updated in consecutive (distinct) 
rounds. But in this order, one of the vertices in 
the dependency graph (a block) 
should be earlier than the others. 
This is impossible due to dependency 
on other vertices. 
\end{proof}

We will now slightly modify \Cref{alg:main} to create a new algorithm 
which not only computes 
a feasible sequence $\updatesequence$ for a given update flow network in polynomial time,
whenever it exists, but which also ensures that $\updatesequence$ is as short as possible
(in terms of number of rounds). For any block $b$, let $c(b)$ denote its corresponding flow pair.

%\begin{algorithm}\textbf{Optimal $2$-Flow DAG Update}\label[Algorithm]{alg:mainopt} 
\begin{algorithm}\textbf{Optimal $2$-Flow DAG Update}\label[algorithm]{alg:mainopt} 
	\begin{enumerate}
		\item [] \textbf{Input: Update Flow Network $G$}
		\item Compute the dependency graph $D$ of $G$. \label{line:1}
		\item If there is a cycle in $D$, return \emph{impossible to update}.
		\item If there is any block $b$ corresponding to a
		sink vertex of $D$ with $(b\cap
		F^u_{c(b)})-\Start{b}\neq\emptyset$ set
		$i\colon\!\!\!=2$, otherwise set $i\colon\!\!\!=1.$
		\item While $D\neq \emptyset$ repeat:
		\begin{enumerate}[i]
			\item \label{lbl:stepopt} Schedule the update of all blocks $b$ which correspond to the sink 
			vertices of $D$
			% as in
			%\Cref{alg:updatefreeblock}
			 for the
			rounds $i-1$, $i$, $i+1$, such that
			$\Start{b}$ is updated in round $i$.
			\item Delete all of the current sink vertices from $D$.
			\item Set $i\colon\!\!\!= i+1$.
		\end{enumerate}
	\end{enumerate}
\end{algorithm}

\begin{theorem}
	An optimal (feasible) update sequence 
	on acyclic update flow networks with exactly~$2$ update flow pairs can be found in linear time.
\end{theorem}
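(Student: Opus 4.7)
The plan is to prove three properties of \Cref{alg:mainopt}: feasibility (it outputs a valid sequence whenever one exists), round-optimality, and linear running time. For feasibility, the case where the dependency graph contains a cycle is handled by \Cref{lem:nocycle}; in the acyclic case every iteration schedules the current sinks using the three-round prep--Start--cleanup pattern justified by \Cref{lem:updatewholeblock}. The subtle point is that successive waves are allowed to overlap: in a given round the algorithm may simultaneously perform cleanup of a resolved wave, Start updates of the current wave, and preparation of the next wave. The consistency rule still holds because cleaned-up edges belong to blocks whose Starts fired in an earlier round (so they carry no transient flow under any intra-round order), prepared edges belong to blocks whose Starts have not yet fired (so they likewise carry none), and each firing Start safely switches its block from its old path to its new path, since the capacity-critical shared edges along that new path have already been freed by the corresponding earlier Start in the chain. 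Hence every subset of every round still admits valid transient flows.

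For optimality I would match the algorithm's output against a lower bound from the longest chain in the dependency DAG. Let $b_k\rightarrow b_{k-1}\rightarrow\dots\rightarrow b_1$ be such a chain and let $\updatesequence$ be any feasible sequence. By \Cref{lem:updateblockstart}, the round containing the update of $\Start{b_j}$ lies strictly between the rounds in which the preparatory edges of $b_j$ are activated and the rounds in which its old-only edges are deactivated. Each dependency $b_{j+1}\rightarrow b_j$ certifies a capacity-critical shared edge, which enforces $\updatesequence(\Start{b_{j+1}})>\updatesequence(\Start{b_j})$, yielding $k$ distinct Start rounds. An additional preparation round before the first Start is needed whenever an initial sink has $F^u\cap b-\Start{b}\neq\emptyset$ (matching the algorithm's initialization with $i=2$), and an additional cleanup round after the last Start is forced by \Cref{cor:blockroundoptimal} whenever the final block is nontrivial on both sides. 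Summing yields the $k+2$ span that \Cref{alg:mainopt} produces; the corner cases with no initial preparation or no final cleanup reduce the bound accordingly, and these are captured by the algorithm's conditional initialization with $i=1$.

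For linear running time, the blocks of each flow pair can be identified in a single pass along its topological order, the dependency graph has $O(|V|+|E|)$ vertices and edges since every dependency witnesses a distinct capacity-critical shared edge, and the main loop is a textbook topological peel that runs in linear time with adjacency lists and sink counters. The main obstacle is the joint correctness/optimality argument for the overlap: one must verify that in any round the coexistence of depth-$j$ cleanup, depth-$(j+1)$ Start, and depth-$(j+2)$ preparation respects the consistency rule under every intra-round execution order, while simultaneously arguing that no feasible schedule can compress the chain further than one fresh round per dependency level, so that the $k+2$ span is tight.
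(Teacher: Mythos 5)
Your feasibility and running-time arguments follow the paper's route: the cycle test via \Cref{lem:nocycle}, sink peeling of the dependency DAG, and the three-round preparation--Start--cleanup pattern of \Cref{lem:updatewholeblock}; your explicit check that cleanup of one wave, Starts of the next, and preparation of the one after can coexist in a single round under any intra-round order is a welcome elaboration of a point the paper passes over quickly. Where you genuinely diverge is optimality. The paper runs an exchange argument: it supposes some block's Start fires strictly earlier in an optimal schedule than under \Cref{alg:mainopt}, picks the earliest such block, shows it would already have been a sink of the algorithm's dependency graph at that time, and concludes $\round_{\text{\sc Alg}}(b)\leq\round_{\text{\sc Opt}}(b)$ for every block, finishing off the last round with Corollary~\ref{cor:blockroundoptimal}. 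You instead prove a direct adversary lower bound along the longest dependency chain; your inequality $\updatesequence(\Start{b_{j+1}})>\updatesequence(\Start{b_j})$ is correct (the shared capacity-critical edge plus the subset rule for asynchronous rounds forces strictness), and together with \Cref{lem:updateblockstart} it cleanly yields $k$ distinct Start rounds for a chain of $k$ blocks. This is a more quantitative and arguably more transparent lower bound than the paper's.

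There is, however, a gap in matching that lower bound to what \Cref{alg:mainopt} actually outputs. Your chain bound charges a preparation round only when the \emph{minimal block of the maximizing chain} satisfies $F^u\cap b-\Start{b}\neq\emptyset$, and a cleanup round only when the \emph{maximal block of that same chain} has old-only vertices. The algorithm, by contrast, sets $i=2$ as soon as \emph{any} initial sink needs preparation, and spends a final cleanup round as soon as \emph{any} latest block is nontrivial on both sides. These conditions need not coincide: a longest chain requiring neither boundary round can coexist with a short chain requiring both, in which case your lower bound is strictly below the algorithm's round count. To close the argument you must take the maximum over all chains of the \emph{weighted} length (chain length plus that chain's own preparation and cleanup indicators, plus the isolated per-block bound from Corollary~\ref{cor:blockroundoptimal}) and then verify that the schedule produced by \Cref{alg:mainopt} attains exactly this maximum; the sentence ``these are captured by the algorithm's conditional initialization'' asserts this match rather than proving it, and it is precisely where the remaining work lies. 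The paper's block-by-block exchange comparison is its way of sidestepping the need to identify the extremal chain explicitly, so either supply the weighted-chain matching argument or fall back on a per-block comparison in that style.
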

\begin{proof}
	Let $G$ denote the given update flow network. 
	In the following, for ease of presentation, we will slightly abuse 
	terminology and say that ``a block is updated in some round'',
	meaning that the block is updated in the corresponding consecutive rounds 
	as in the proof of
	\Cref{lem:updatewholeblock}.
	
	We proceed as follows. First, we find a block decomposition and 
	create the dependency graph of the input instance.
	This takes linear time only. 
	If there is a cycle in that graph, we output \emph{impossible}
	(cf~\Cref{lem:nocycle}). 
	Otherwise, we apply \Cref{alg:mainopt}. 
	As there is no cycle in the dependency
	graph (a property that stays invariant), 
	in each round, either there exists a free block which is not
	processed yet, or everything is already updated or is in the process of being updated. Hence, if there is a
	feasible solution (it may not be unique), we can find one in time~$O(\sizeof{G})$. 
	
	For the optimality in terms of the number of rounds, 
	consider two feasible update sequences. Let $\updatesequence_{\text{\sc Alg}}$ be the update
	sequence produced by \Cref{alg:mainopt} and let $\updatesequence_{\text{\sc
			Opt}}$ be a feasible update sequence that realizes the minimum
	number of rounds. According to~\Cref{lem:updateblockstart},
	any block $b$ is processed only in 
	round $\Start{b}$.
	
	Suppose there is a block $b'$ such that $\round_{\text{\sc
			Opt}}(b')<\round_{\text{\sc Alg}}(b')$. Then let $b$ be the block
	with the smallest such $\round_{\text{\sc Opt}}(b)$. Hence, for every
	block $b''$ with $\round_{\text{\sc Opt}}(b'')\leq\round_{\text{\sc
			Opt}}(b)$, $\round_{\text{\sc Opt}}(b'')\geq\round_{\text{\sc
			Alg}}(b'')$ holds. Since $\Start{b}$ is updated in round
	$\round_{\text{\sc Opt}}(b)$, there are no dependencies for $b$ that are
	still in place in this round. Thus, 
	according to the
	sequence $\updatesequence_{\text{\sc Opt}}$,
	$b$ is a sink vertex of the
	dependency graph after round $\round_{\text{\sc Opt}}(b)-1$.
	Furthermore, by our previous
	observation, every start of some block has been updated up to this
	round in the optimal sequence, and hence it is also already updated in the same
	round in $\updatesequence_{\text{\sc Alg}}$. This means that after round
	$\round_{\text{\sc Opt}}(b)-1<\round_{\text{\sc Alg}}(b)-1$, $b$ is a sink
	vertex of the dependency graph of $\updatesequence_{\text{\sc Alg}}$ as well.
	Thus, \Cref{alg:mainopt} would have scheduled the update of
	block $b$ in the rounds $\round_{\text{\sc Opt}}(b)-1$, $\round_{\text{\sc
			Opt}}(b)$ and $\round_{\text{\sc Opt}}(b)+1$. Contradiction.
	
	Thus $\round_{\text{\sc Alg}}(b)\leq \round_{\text{\sc Opt}}(b)$ 
	for all blocks $b$. Now let $b_1,\dots,b_{\ell}$ be the last 
	blocks whose starts are updated the latest under 
	$\updatesequence_{\text{\sc Alg}}$. If there is some $i\in[\ell]$ 
	such that $\left| E_{b_i}^o\right|\geq 2$ and $\left| 
	E_{b_i}^u\right|\geq 2$, $\updatesequence_{\text{\sc Alg}}$ uses 
	exactly $\round_{\text{\sc Alg}}(b_i)+1$ rounds; 
	otherwise it is one round less, by Corollary~\ref{cor:blockroundoptimal}. 
	By our previous observation, none of these blocks can start later than 
	$\round_{\text{\sc Alg}}(b_i)$ and thus $\round_{\text{\sc Opt}}$ uses at 
	least as many rounds as \Cref{alg:mainopt}. Hence the algorithm is 
	optimal in the number of rounds. 
\end{proof}

\section{NP-hardness for More Flows}\label{sec:np-hard}

This section shows that the polynomial-time result derived
above cannot be generalized much further: 
it is NP-hard to compute a shortest schedule already for six flows,
and even if the pair of old and new path \emph{forms a DAG}.
In fact, we show that already the decision problem, i.e., whether a feasible
schedule exists, is NP-hard.

\begin{theorem}
\label{thm:6flow_hardness}
Deciding whether a feasible network update schedule exists for 
a given update flow network in which each flow pair 
forms a DAG is $NP$-hard for 6 flows.
\end{theorem}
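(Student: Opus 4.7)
The plan is to reduce from 3-SAT: given a CNF formula $\varphi$ with $n$ variables and $m$ clauses, I would construct in polynomial time an update flow network $G_\varphi$ with exactly six update flow pairs, each pair forming a DAG, so that $G_\varphi$ admits a feasible update schedule if and only if $\varphi$ is satisfiable. The source of the extra hardness compared to the 2-flow case is that, when only two flows meet on an overloaded edge, the capacity conflict yields a unique pairwise dependency and feasibility just reduces to checking a dependency DAG. With three or more flows sharing a capacity-constrained edge, the same edge can host \emph{disjunctive} requirements of the form ``to update this block, it is enough that at least one of several other blocks has already been resolved,'' which is exactly the combinatorial freedom needed to encode clauses.

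I would implement a variable gadget using two designated flows, call them $A$ and $B$: each variable $x_i$ corresponds to a small sub-network in which the old path of $A$ and the new path of $B$ (or vice versa) cross on a bottleneck edge whose capacity admits only one of them at a time. Feasibility then forces a strict order between two specific blocks, and the two possible orders naturally represent the two truth values of $x_i$. Stringing the variable sub-gadgets together along a shared DAG backbone for $A$ and $B$ turns the local binary choices into a global assignment $\tau:\{x_1,\ldots,x_n\}\to\{\text{true},\text{false}\}$.

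The remaining four flows implement the clause gadgets. For every clause $C_j = (\ell_{j,1}\lor \ell_{j,2}\lor \ell_{j,3})$ I would place a ``clause block'' that shares an overloaded edge with three blocks coming from the variable sub-gadgets corresponding to the literals in $C_j$, with demands and capacities tuned so that the clause block can be resolved exactly when at least one of these three literal blocks has been resolved in the ``true'' direction. A satisfying assignment then lifts to a feasible schedule by first resolving the variable sub-gadgets in the order prescribed by $\tau$ and then all of the clause blocks; conversely, any feasible schedule, read off at the variable sub-gadgets, must induce an assignment satisfying every clause, since an unsatisfied clause would leave its clause block permanently stuck and force a dependency cycle across the flows.

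The main obstacle will be the \emph{global} construction: squeezing an arbitrary 3-SAT instance onto only six flow pairs while keeping each pair a DAG and preventing any unintended dependencies between different variable or clause sub-gadgets. In particular, the four clause flows must weave through all $m$ clause sub-gadgets without creating spurious shared edges, and the two variable flows must traverse all $n$ variable sub-gadgets in a fixed topological order without accidentally coupling distinct variables. I expect this to be handled by routing each flow as a long ``spine'' with the sub-gadgets attached as short parallel detours, and by performing a careful local case analysis at every shared edge to verify that the induced disjunctive dependencies are \emph{exactly} those encoding $\varphi$ and nothing more.
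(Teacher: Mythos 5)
Your high-level template matches the paper's: a reduction from 3-SAT with two flows carrying variable gadgets and the remaining flows carrying clause gadgets, where an edge of capacity $k$ shared by $k+1$ flows expresses the disjunction ``at least one of these flows must have moved off this edge.'' But there is a genuine gap in the mechanism tying the two halves together. You encode a truth value as the \emph{order} in which two blocks of the variable flows are resolved, and you claim an unsatisfied clause leaves its block ``permanently stuck'' and ``forces a dependency cycle.'' Neither follows from what you describe. A capacity constraint only sees the \emph{current} state of the network, not its history: once both sides of a variable gadget have been resolved (and both eventually must be, in one order or the other), the order is forgotten and the clause's shared edge is free. Since your clause blocks only wait on variable blocks, and nothing you describe makes any variable block wait on a clause block, every instance --- satisfiable or not --- admits the trivial schedule ``resolve all variable gadgets one at a time, then all clause gadgets,'' and the reduction fails to be sound.

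The paper closes exactly this hole with an ingredient you are missing: a dedicated \emph{blocking flow} $B$ (so the budget is two variable flows $X,\overline{X}$, three clause flows $D_1,D_2,D_3$ --- one per literal \emph{position}, which is the trick that lets three simple paths collectively present three alternative branches at every clause gadget --- and one blocking flow). The old path $B^o$ occupies a capacity-$2$ edge $(w_1^j,w_2^j)$ inside every variable gadget, so before $B$ is rerouted at most one of $X,\overline{X}$ can update there; this is what makes the assignment read off from a feasible schedule \emph{consistent}. The new path $B^u$ occupies a capacity-$3$ edge $(u^i,v^i)$ inside every clause gadget shared with $D_1^o,D_2^o,D_3^o$, so $B$ can be rerouted only after every clause has a witnessing literal, and each witness $D_j$-branch in turn sits on a capacity-$1$ edge of the corresponding literal's old path, so it can move only after that literal has been set true. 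The round in which $(s,B)$ is resolved is thus a global synchronization barrier: before it, truth values are consistent; crossing it requires all clauses witnessed; after it, the capacity-$2$ edges are released and the ``false'' sides of all variable gadgets can be cleaned up, so satisfiable formulas do yield complete schedules. For unsatisfiable formulas this barrier is what creates the circular wait (the second update at some variable waits for $B$, $B$ waits for a clause witness, and the witness waits for an update blocked by $B$). Without such a barrier your construction cannot distinguish satisfiable from unsatisfiable formulas, so the key idea of the proof is missing.
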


We use a reduction from 3-SAT. Let $C$ be any 3-SAT formula 
with $n$ variables $x_1,\dots, x_n$ and 
$m$ clauses $C_1,\dots, C_m$. The resulting update flow 
network is denoted as $G(C)$.

We will create $6$ flow pairs: $X$, $\overline{X}$, $D_1$, 
$D_2$, $D_3$ and $B$, each having demand $1$. $B$is
the blocking
pair: it can by updated only if all clauses are satisfied. 
Flows $X$ and $\overline{X}$ contain gadgets
for all literals, $X$ for positive ones and $\overline{X}$ for negative
ones. Updating a variable gadget in $X$
corresponds to assigning the variable value $1$ in $C$. 
Flow $B$ prevents the variable gadget to be updated
in both $X$ and $\overline{X}$, unless all clauses are satisfied.

Flows $D_1$, $D_2$ and $D_3$ encode clauses of $C$. 
Each of these flows contains a clause gadget linking 
a clause to one of its literals. This gadget can be updated only if the literal is 
satisfied. Updating a clause gadget
in one of those flows will allow $B$ to be updated.

Now we proceed with the detailed description of the reduction.

\begin{enumerate}
	\item \textbf{Clause gadgets:} For every clause $i\in[m]$ we 
	introduce eight vertices: $u^i$, $v^i$
	and for $j\in \{1,2,3\}$ $u^i_j$ and $v^i_j$. For $j\in \{1,2,3\}$ 
	we add edge $(u^i, v^i)$ to $D^o_j$
	and edges $(u^i, u^i_j)$, $(u^i_j, v^i_j)$ and $(v^i_j, v^i)$ to $D^u_j$.
	\item \textbf{Variable Gadgets:} For every $j\in[n]$, we introduce two
		vertices: $w_1^j$ and $w_2^j$. Let $P_j=\left\{
		p_1^j,\dots,p_{k_j}^j \right\}$ denote the set of indices of the
		clauses containing the literal $x_j$ and $\overline{P}_j=\left\{
		\overline{p}^j_1,\dots,\overline{p}^j_{k'_j} \right\}$ the set of
		indices of the clauses containing the literal
		$\overline{x}_j$. Furthermore, let $\pi(i,j)$ denote the position of
		$x_j$ in the clause $C_i$, $i\in P_j$. Similarly,
		$\overline{\pi}(i',j)$ denotes the position of $\overline{x_j}$ in $C_{i'}$ where 
		$i'\in\overline{P}_j$.

		To $X^u$ and $\overline{X}^u$ we add edge $(w_1^j, w_2^j)$.

		To $X^o$ we add the following edges:
		\begin{itemize}
			\item $(u^{p_i^j}_{\pi(p_i^j,j)}, v^{p_i^j}_{\pi(p_i^j,j)})$ for all $i\in \{1,\dots,k_j\}$,
			\item $(v^{p_i^j}_{\pi(p_i^j,j)}, u^{p_{i+1}^j}_{\pi(p_{i+1}^j,j)})$ for all $i\in \{1,\dots,k_j-1\}$,
			\item $(w_1^j, u^{p_{1}^j}_{\pi(p_{1}^j,j)})$ and $(v^{p_k^j}_{\pi(p_k^j,j)},w_2^j)$.
		\end{itemize}

		We proceed similarly with $\overline{X}^o$ and 
		clauses containing $\overline{x}_j$.
	\item \textbf{Blocking flow:} The goal of flow $B$ is to 
		block update of $w_1^j$, for any $j\in[n]$, in both $X$ and $\overline{X}$.

		To do that we add to $B^o$ the following edges:
		\begin{itemize}
			\item $(w_1^j, w_2^j)$, for all $j\in[n]$,
			\item $(w_2^j, w_1^{j+1})$, for all $j\in[n-1]$.
		\end{itemize}

		We also add the following edges to $B^u$:
		\begin{itemize}
			\item $(u^i, v^i)$, for all $i\in[m]$,
			\item $(v^i, u^{i+1})$, for all $i\in[m-1]$.
		\end{itemize}
	\item \textbf{Source and Terminal:} Now we need to 
	connect all the gadgets in the flows.
		The source and the terminal of all flows will be $s$ and $t$.

		To $D_j^o$ and $D_j^u$, for $j\in\{1,2,3\}$, we add the following edges:
		\begin{itemize}
			\item $(v^i,u^{i+1})$, for all $i\in[m-1]$,
			\item $(s, u^1)$ and $(v^m, t)$.
		\end{itemize}

		To $X^o$, $X^u$, $\overline{X}^o$ and $\overline{X}^u$ 
		we add the following edges:
		\begin{itemize}
			\item $(w^j_2, w^{j+1}_1)$, for all $j\in[n-1]$
			\item $(s, w^1_1)$ and $(w^n_2, t)$
		\end{itemize}

		We also add edges $(s, w^1_1)$ and $(w^n_2, t)$ to $B^o$ and 
		edges $(s, u^1)$ and $(v^m, t)$ to $B^u$.
 		
 	\item \textbf{Edges capacity:} For all $j\in[n]$ we set the capacity of
 		edge $(w^j_1, w^j_2)$ to be $2$.
 		Also for all $i\in[m]$ we set capacity of edge $(u^i,v^i)$ to be $3$
 		and capacity of edge $(u^i_j,v^i_j)$, for $j\in\{1,2,3\}$, to be $1$.

 		For all the other edges, we set their capacity to be $6$, that is, to the
 		number of flows. Therefore they cannot violate any capacity constraint.
\end{enumerate}

\begin{lemma} Given any valid update sequence $\updatesequence$ 
		for the above constructed update flow network $G(C)$, 
		the following conditions hold.
		\begin{enumerate}
			\item For every $r<\updatesequence(s,B)$ and $j\in[n]$ $\updatesequence(w^j_1, X) > r$ or $\updatesequence(w^j_1, \overline{X}) > r$.
			\label{con:nphardvar}
			\item For every $r\geq\updatesequence(s,B)$ and $i\in[m]$ $\updatesequence(u^i,D_1) < r$, $\updatesequence(u^i,D_2) < r$ or $\updatesequence(u^i,D_3) < r$.
			\label{con:nphardclause}
		\end{enumerate}
		\label{lem:6flowhardaux}
	\end{lemma}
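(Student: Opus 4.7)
The plan is to prove each condition by a capacity argument: assuming the condition fails, I exhibit a state arising during $\updatesequence$ in which some edge exceeds its capacity, contradicting feasibility via the consistency rule. The key structural input is \Cref{lem:updateblockstart} applied to the single block of the blocking flow $B$ (whose old and new paths share only $s$ and $t$): this forces every $(u^i,B)$ to be scheduled strictly before $(s,B)$, and every $(w_1^j,B)$ strictly after $(s,B)$.

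For condition~(\ref{con:nphardvar}), the bottleneck is the edge $(w_1^j,w_2^j)$, which has capacity~$2$ and lies in $B^o$, $X^u$ and $\overline{X}^u$. If one had $\updatesequence(w_1^j,X)\le r$ and $\updatesequence(w_1^j,\overline{X})\le r$ for some $r<\updatesequence(s,B)$, then the state $U:=\bigcup_{i\le r}\round_i$ (valid by consistency) has $(s,B)\notin U$ and, by \Cref{lem:updateblockstart}, $(w_1^j,B)\notin U$, so $B$ still uses $B^o$ and $(w_1^j,w_2^j)$ is active for $B$. Simultaneously the updates at $w_1^j$ for $X$ and $\overline{X}$ activate their respective new edges $(w_1^j,w_2^j)$. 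The total load is~$3>2$, a contradiction.

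For condition~(\ref{con:nphardclause}), the bottleneck is the edge $(u^i,v^i)$, which has capacity~$3$ and lies in $D_1^o$, $D_2^o$, $D_3^o$ and $B^u$. Assume $\updatesequence(u^i,D_j)\ge r$ for every $j\in\{1,2,3\}$ with $r\ge\updatesequence(s,B)$. The delicate case is $r=\updatesequence(s,B)$: here I do not look at the end of round $r-1$ (where $(s,B)$ is not yet resolved) but apply the consistency rule to the singleton subset $S=\{(s,B)\}\subseteq\round_r$. The resulting state has $(s,B)$ resolved, every $(u^i,B)$ already resolved by \Cref{lem:updateblockstart}, and no $(u^i,D_j)$ resolved; hence $B$ routes through $(u^i,v^i)$ via $B^u$ while every $D_j$ still routes through it via $D_j^o$, giving load~$4>3$. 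The case $r>\updatesequence(s,B)$ is analogous using the plain prefix $\bigcup_{i<r}\round_i$.

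The main obstacle is exactly the boundary case $r=\updatesequence(s,B)$ for condition~(\ref{con:nphardclause}): the contradiction cannot be read off from any fully-resolved prefix of $\updatesequence$ and instead exploits the asynchronous nature of a round encoded in the consistency rule (validity must hold on every subset of the current round, not just the full round). Once this is recognised, the only remaining bookkeeping is verifying that the claimed edges are genuinely active for the claimed flows, which is immediate from \Cref{lem:updateblockstart}.
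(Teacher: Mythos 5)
Your proof is correct and takes essentially the same route as the paper: a capacity argument on the two bottleneck edges $(w_1^j,w_2^j)$ (capacity $2$, shared by $B^o$, $X^u$, $\overline{X}^u$) and $(u^i,v^i)$ (capacity $3$, shared by $B^u$ and the three $D_j^o$), combined with the observation that $T_{B,U}$ is either $B^o$ or $B^u$ according to whether $(s,B)$ is resolved. The one place you go beyond the paper's writeup is the boundary case $r=\updatesequence(s,B)$ of condition~2, where you invoke the consistency rule on the singleton subset $\{(s,B)\}\subseteq\round_r$; the paper simply evaluates at the fully resolved state $U_r$, which by itself only yields the non-strict inequality, so your refinement is a genuine (and welcome) tightening rather than a different approach.
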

	\begin{proof}
		Note that $B^o$ and $B^u$ have no common nodes apart from
		$s$ and $t$. That means that for any $r$ either $T_{B,U_r} = B^o$
		or $T_{B,U_r} = B^u$. Now we prove both conditions.
		\begin{enumerate}
			\item Let us consider any $j\in[n]$. 
			As $r<\updatesequence(s,B)$, then $T_{B,U_r} = B^o$. 
			The capacity of edge $(w^j_1, w^j_2)$ is $2$ and it belongs 
			to $B^o$. Therefore it can be in at most
			one other transient flow, so the condition holds.
			\item Let us consider any $i\in[m]$. As $r\geq\updatesequence(s,B)$, 
			then $T_{B,U_r} = B^u$.
			The capacity of edge $(u^i, v^i)$ is $3$ and it belongs to $B^u$.
			Therefore it can be in at most two other transient flows, 
			so the condition holds.
		\end{enumerate}
	\end{proof}

	\begin{proof}

	Now we are ready to prove Theorem \ref{thm:6flow_hardness}. 
	First, let us assume that $C$ is satisfiable and
	we will construct valid update sequence for $G(C)$.
	Let $\sigma$ be an assignment satisfying $C$. Then the
	update sequence for $G(C)$ is as follows.
	\begin{enumerate}
		\item For every $j\in[n]$, if $\sigma(x_j) = 1$ then resolve $(w^j_1,X)$,
			otherwise resolve $(w^j_1,\overline{X})$.
		\item For every clause $C_i$ at least one of 
		the edges $(u^i_1,v^i_1)$, $(u^i_2,v^i_2)$ 
			and $(u^i_3,v^i_3)$ is neither in $T_{X,r_2^f-1}$ nor $T_{\overline{X},r_2^f-1}$. So the update of $u^i$ can be resolved in 
			the corresponding flow $D_1$, $D_2$ or $D_3$ (this follows
			from $\sigma$ being satisfying assignment).
		\item As every $i\in[m]$ edge $(u^i,v^i)$ is used 
		by at most $2$ flows,
			we can resolve every update in $B^u$, resolving $(s,B)$ as the last one.
		\item For every $j\in[n]$, resolve either $(w^j_1,X)$ or $(w^j_1,\overline{X})$,
			depending on which one was not resolved in step 1.
		\item For every $i\in[m]$ resolve updates of $u^i$ 
		in flows $D_1$, $D_2$ and $D_3$
			(those that have not been resolved in step 2).
		\item Resolve the remaining updates in all flows.

	\end{enumerate}

	Now let us assume that there is a valid update 
	sequence $\sigma$ for $G(C)$.
	We will show that $C$ is satisfiable by 
	constructing satisfying assignment $\sigma$.

	Let us consider round $r=\sigma(s,B)$. We assign values
	 in the following way. For $j\in[n]$, if 
	 $\sigma(w_1,X) < r$ then $\sigma(x_j) := 1$
	 and if $\sigma(w_1,\overline{X}) < r$ then $\sigma(x_j) := 0$. If both 
	 $\sigma(w_1,X) > r$ and $\sigma(w_1,\overline{X}) > r$ 
	 we assign to $x_j$ a random
	 value. By Condition \ref{con:nphardvar} of Lemma \ref{lem:6flowhardaux} 
	 this is a correct assignment,
	 that is no variable is assigned two values.

	 We want to prove that this assignment satisfies $\sigma$. 
	 Let us consider any clause $C_i$.
	 By Condition \ref{con:nphardclause} of Lemma \ref{lem:6flowhardaux} 
	 at least one of $(u^i,D_1)$, $(u_i,D_2)$ or $(u_i, D_3)$
	 is updated before round $\roundr$. That m0eans that at least for one of variables $x_j$
	 in $C_i$ $\sigma(w^j_1,X) < r$, if $C_i$ contains literal $x_j$, or $\sigma(w^j_1,\overline{X}) < \roundr$, if $C_i$
	 contains literal $\overline(x)_j$. This means that
	 $C_i$ is satisfied by $x_j$ in $\sigma$.

	 \end{proof}

\section{Optimal Scheduling of Arbitrary Problems}\label{sec:mip}

Since the problem is generally NP-hard, for completeness and in order to
investigate the runtime of such an approach, we in the following
describe an optimal scheduling algorithm for a general model
and arbitrary number of flows, which runs in super-polynomial time.
The algorithm is based on mixed integer linear programming.

The formulation first reserves variables for all possible rounds during which
a node can update a flow (Constraints (\ref{line:1})).
Henceforth, we refer to them as \textit{schedule variables}.
Schedule variables are constrained so that a node updates each of its flows only once.
The remaining constraints ensure the following feasibility criteria:
\begin{enumerate}
	\item
	Constraints (\ref{LP:init_oldedges}) to (\ref{LP:forknode}) prepare the variables
	used in the consistency checks.
	\item 
	Assuming a value assignment to the schedule variables,
	Constraints (\ref{LP:transientgraph1}) to (\ref{LP:transientflow_check}) emulate the update
	with respect to the schedule and ensure no flow is interrupted during any update round $r$.
	That is, for any link that receives a new flow during $r$, the incident node
	must have been already updated in an earlier round ($<r$).
	Also, any node that removes a flow from an outgoing link
	must postpone this to a later round ($>r$). 
	However, there is an exception for nodes that are incident
	to both old and new flow links (denoted by fork nodes).
	This criteria accounts for the fact that node updates occur
	asynchronously and the flows must not be interrupted  in any case.
	\item 
	With the last set of constraints, we ensure that during the emulation all capacities are respected.

\end{enumerate} 

\begin{figure}[!h]
\begin{IEEEeqnarray}{ll}
	\textbf{Minimize}\;R  \label{LP:objective}
	\\
	  \textit{\small{ROUNDS}} = \{1,..,(|V|-1).|P|\}    \nonumber
	\\
	 \sum_{r\in \textit{\scriptsize{ROUNDS}}} x^r_{v,i} = 1	&	\forall i\in |P|, v\in P_i \label{LP:schedule}		\IEEEeqnarraynumspace
	\\
	y^0_{(u,v),i} = 1	&	\forall (u,v)\in F_i^o \label{LP:init_oldedges}
	\\
	y^0_{(u,v),i} = 0	&	\forall (u,v)\in F_i^u \label{LP:init_newedges}
	\\
	 \forall i\in [|P|], r\in \textit{\small{ROUNDS}} \;
	 	\boldsymbol{\{}	&  \label{LP:repeat}
	\\
	 x^r_{v,i},\textit{fork}^r_{v,i},\textit{join}^r_{v,i} \in \{0,1\}   &	\forall v\in P_i
	\\
	 y^r_{(u,v),i},f^r_{(u,v),i} \in [0,1]	
	 &	\forall (u,v)\in P_i\label{LP:transientgraph1}
	%    \\&   \alpha^r_{(u,v),i} \leq \textit{fork}^r_{u,i}, (1 - y^{r}_{(u,v),i})
	\\
	 R \geq r.x^r_{v,i}	&	\forall v\in P_i \label{LP:rounds}
	\\
	y^r_{(u,v),i} = \sum_{r'\leq r} x^{r'}_{u,i} &	\forall (u,v)\in F_i^u \label{LP:activeflag1}
	\\
	y^r_{(u,v),i} = 1 - \sum_{r'\leq r} x^{r'}_{u,i}   &	\forall (u,v)\in F_i^o \label{LP:activeflag2}
	\\
	\textit{fork}^r_{v,i} =		%&	\nonumber	\\		\hspace{1em}
		\begin{cases}
			x^r_{v,i} &	\hspace{-0.5em} \exists w,w'\in P_i:
					\begin{cases}
						(v,w)\in F_i^o \\
						(v,w')\in F_i^u
					\end{cases}	\\
			0	& \text{else}
		\end{cases}
		&	\forall v\in P_i \label{LP:forknode}
	\\
	f^r_{(u,v),i} \leq y^{r-1}_{(u,v),i} + \textit{fork}^r_{u,i}
		&	\forall (u,v)\in P_i\label{LP:transientgraph1}
	\\
	f^r_{(u,v),i} \leq y^{r}_{(u,v),i} + \textit{fork}^r_{u,i}
		&	\forall (u,v)\in P_i \label{LP:transientgraph2}
	%    \\
	%&   \textit{fork}^r_{v,i} \leq f^r_{(v,w),i}, f^r_{(v,w'),i} &&\hspace{-10.5em}\forall v,w,w': (v,w) \in F_i^o, (v,w') \in F_i^u \label{LP:forkflowbalance}
	\\
	\textit{join}^r_{v,i} \leq		%	& \\	\hspace{1.4em}
		f^{r}_{(u,v),i},f^{r}_{(u',v),i}	%			&	\hspace{-9em}
		&	\forall v,u,u' \! \in \! P_i:\!
				\begin{cases}
						(u,v) \! \in \! F_i^o	\\
						(u',v) \! \in \! F_i^u
				\end{cases}	\label{LP:joinnode}
	\\
	\sum_{(v,w)\in P_i} f^{r}_{(v,w),i} - \sum_{(u,v)\in P_i} f^{r}_{(u,v),i} =		%&	\nonumber	\\	\hspace{1.4em}
	\begin{cases}
					1 + \textit{fork}^r_{s,i}  &  v=s\\
					-(1 + \textit{join}^r_{t,i}) &  v=t\\
					\textit{fork}^r_{v,i} - \textit{join}^r_{v,i}  &  \text{else}
			\end{cases}
		&	\forall v\in P_i \label{LP:transientflow_check}
	\\
	 \boldsymbol{\}} \nonumber
	\\
	\sum_{i\in [|P|]} f^r_{(u,v),i} \leq C_{(u,v)}	&	\hspace{-5em}	\forall r \in \textit{\small{ROUNDS}}, (u,v)\in E \label{LP:capacitycheck}
\end{IEEEeqnarray}
\caption{Mixed Integer Program for $k$ flow pairs}
\label{MIP}
\end{figure}
	
Next, we describe the formulation in detail.
\begin{itemize}
	\item (\ref{LP:schedule}): Each schedule variable $x^r_{v,i}$ indicates whether a node $v$ is scheduled to update flow $i$ in round $\roundr$.
	%(i.e~$x^r_{v,i} = 1 \iff v$ updates the flow pair $i$ in round $r$).
	
	\item (\ref{LP:repeat}): Repeat the embraced lines for every pairs $P_i$ and each  round $\roundr \in \textit{ROUNDS}$.
	
	\item (\ref{LP:activeflag1}),(\ref{LP:activeflag2}): $y^r_{(u,v),i}$ indicates whether the link $(u,v)$ is active for pair $i$ immediately after round $\roundr$ (i.e. active graph).
	
	\item (\ref{LP:forknode}): \textit{fork nodes} are the nodes at which old and update paths split.
	A fork node $v$ acts as a source, doubling its incoming transient flow $i$, when it updates the flow during round $\roundr$ (i.e.~if $\textit{fork}^r_{v,i}$ is 1).
	
	\item (\ref{LP:joinnode}): \textit{join nodes} are nodes at which the old and update paths meet once again.
	A join node acts as a sink (if $\textit{join}^r_{v,i}$ is 1)  when the two in-links both carry the transient flow $i$ in the transient state of round $\roundr$.
	
	\item (\ref{LP:transientgraph1}),(\ref{LP:transientgraph2}): $f^r_{u,v,i}$ specifies the transient flow $i$ on a link $(u,v)$.
	The first terms on the r.h.s.~constrains together state that the link is allowed to be utilized in the transient state of round $\roundr$,
	 if it is active before round $\roundr$ and it remains active during the round.
	Alternatively,  if the link is deactivating in round $\roundr$ due to the updating fork node $u$, then the second term allows the link to be usable in the transient state.
	 (This, along with Constraint (\ref{LP:transientflow_check}) guarantees there will be no loops on the old out-branch of any updating fork node.)

	\item (\ref{LP:transientflow_check}): Runs a variable-size transient flow $i$ from $s$ to $t$
	in order to impose $st$-connectivity in the worst-case transient state.
	The flow produced at $s$ is of size 1 and it arrives at $t$ with the same size.
	In the meanwhile, any active fork node (including possibly $s$) adds one unit to this flow and
	splits it into two unit-size flows along both its out-links.
	Later, a join node consumes this extra flow by taking away the 1 unit.
	
	\item (\ref{LP:capacitycheck}): The capacity constraints.
\end{itemize}

Because of a possible cleanup round after a fork node updates,
it is necessary to maintain $st$-connectivity via both (old and update) out-links of the fork node,
which is ensured by (15). In other words, no cleanup (i.e.~removal of old flow rules) should occur
on the old branch of the fork node in the same round it reroutes to the new branch.

\section{Empirical Results}\label{sec:sims}

In order to gain insights into the actual number of rounds
needed to reroute flows in real networks, we conducted
a simulation study on real network topologies. In particular,
we want to compare the length of the schedules produced by our
algorithm (which provably provides \emph{shortest}
schedules) to the state-of-the-art algorithm presented in~\cite{icalp18}
(which only computes \emph{feasible} schedules).
In order to study the need for fast algorithms, we
compare the runtime of our algorithm to the mathematical
programming approach, as it is frequently used in the literature~\cite{update-survey}.

We implemented  \Cref{alg:mainopt} using standard C++ libraries and performed
a exhaustive evaluation on over 100  topologies provided by
\cite{topologyZoo} and $\approx 136$ million records in total.
The graphs were chosen so that the runtime would be practical.
In another, but similar implementation we evaluated \Cref{alg:main}
on the same input cases and
compared the number of rounds obtained from these algorithms in \Cref{fig:charts_rounds}.

%The number of rounds observed under the feasibility algorithm  for each instance is
%at most twice the optimal on the same instance (up to +3 rounds).

We observe that our Algorithm~\ref{alg:main} is a simpler
feasibility algorithm than~\cite{icalp18}, for two flows:
it employs basic batching, which leads to shorter schedules
compared to ~\cite{icalp18}. In the following, 
we hence use Algorithm~\ref{alg:main}
as a baseline and lower bound on the number of rounds needed
by the more complex algorithm in~\cite{icalp18}.

The input data does not provide any capacity on the links.
Capacities determine the block dependency and an insufficient capacity
allocation can lead to a cyclic $D$
which renders an instance infeasible. On the other hand,
examining all possible allocations is not practical.
Hence, in order to capture the maximum rounds in each graph efficiently,
we take into account also the infeasible instances.
Later, we explain how infeasible instances are handled.
In a preprocessing step, the evaluation takes the raw graph and
allocates minimal capacities:
set the capacity to 2 for links that carry the old/update flow paths of both pairs,
otherwise set the capacity to 1.

The program, for every pair of source and destination $(s,t)$, first computes all the paths from $s$ to $t$.
Next, it iterates over all possible path pairs (i.e.~old and update paths) chosen independently for each of the two flows (dismissing identical path pairs).
Each iteration does the following.
\begin{enumerate}
	\item 
	Perform Line~\ref{line:1} on the path pairs and generates a block dependency graph $D$.
	\item
	Enumerate all paths in $D$ and each path $P$ is weighted as follows.
	\begin{enumerate}
		\item Initialize $w(P) = |P|$.
		
		\item \label{item:preparation_round}
		Let $b_1$ be the block that corresponds to the last vertex of $P$. 
		Set $w(P) = w(P)+1$  if $|E[b_1 \cap F^u(b_1)]| > 1$. 
		
		\item \label{item:cleanup_round}
		Let $b_2$ be the block that corresponds to the first vertex of $P$. 
		Set $w(P) = w(P)+1$  if $|E[b_1 \cap F^o(b_1)]| > 1$. 
	\end{enumerate}

	\item 
	Find the path $P_{max} = \max_{P'} w(P')$  (ties broken arbitrarily). Let $\rounds = w(P_{max})$.

	\item \label{itm:singleblocks}
	For each block $b$ corresponding to a vertex in $D$ apply
	$\rounds = \max(\rounds, |E[b \cap F^o(b)]| + |E[b \cap F^u(b)]| + 1)$.	 
\end{enumerate}
At the end, $\rounds$ will hold the actual number of rounds it takes in the optimal
schedule produced by \Cref{alg:mainopt}.
The case \ref{item:preparation_round} accounts for the  preparation (i.e.~adding new flow rules)
round of the  block scheduled earliest in a chain of dependent blocks (i.e.~ current path $P$).
Similarly, \ref{item:cleanup_round} accounts for the  cleanup round (i.e.~removal of old flow rules) of the block
scheduled the latest in that chain.

Eventually, $\rounds$ is determined  either by the chain of dependent blocks that
corresponds to the longest weighted path in $D$,
or by some single block at Line \ref{itm:singleblocks}
due to extra preparation/cleanup rounds consumed by that block.

Any infeasible instance, i.e.~with cyclic block dependency,
can be turned feasible by increasing the capacity of
some link from 1 to 2, hence breaking the cycle.
Therefore, starting from minimal capacity allocation
is always sufficient to preserve the worst case in any topology.

The results show that the optimal number of rounds on the
subject networks vary between 2 and 6 (see \Cref{fig:roundschart_opt}).
Table~\ref{tbl:zoo} lists the numbers obtained from some of
the examined graphs (numbers are rounded to the nearest integer, except those close to 0).

In the second implementation (i.e.~feasibility only),
 we evaluated \Cref{alg:main} on the same
 input data and obtained feasible schedules under
worst case capacity allocations.
The number of rounds spread between 2 and 14 (see \Cref{fig:roundschart_feas}).

We also implemented the MIP in \Cref{MIP}.
The runtime even for 2 flows are usually in few seconds,
 much longer compared to the results from the first implementation
 ($\approx100$ microseconds, see \Cref{fig:times_plot} and \Cref{fig:example_mip}).

\begin{figure}[!h]
	\begin{subfigure}{.33\columnwidth}
		\centering
		\includegraphics[width=0.99\columnwidth]{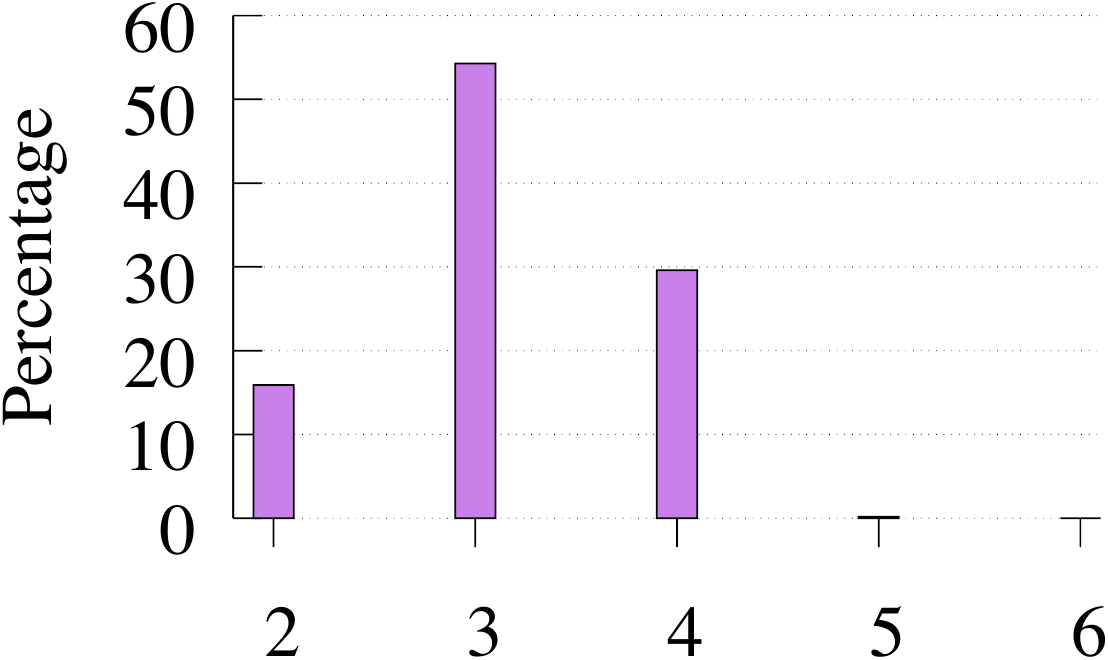}
		\caption{}
		\label{fig:roundschart_opt}
	\end{subfigure}%
	\begin{subfigure}{.33\columnwidth}
		\centering
		\includegraphics[width=0.99\columnwidth]{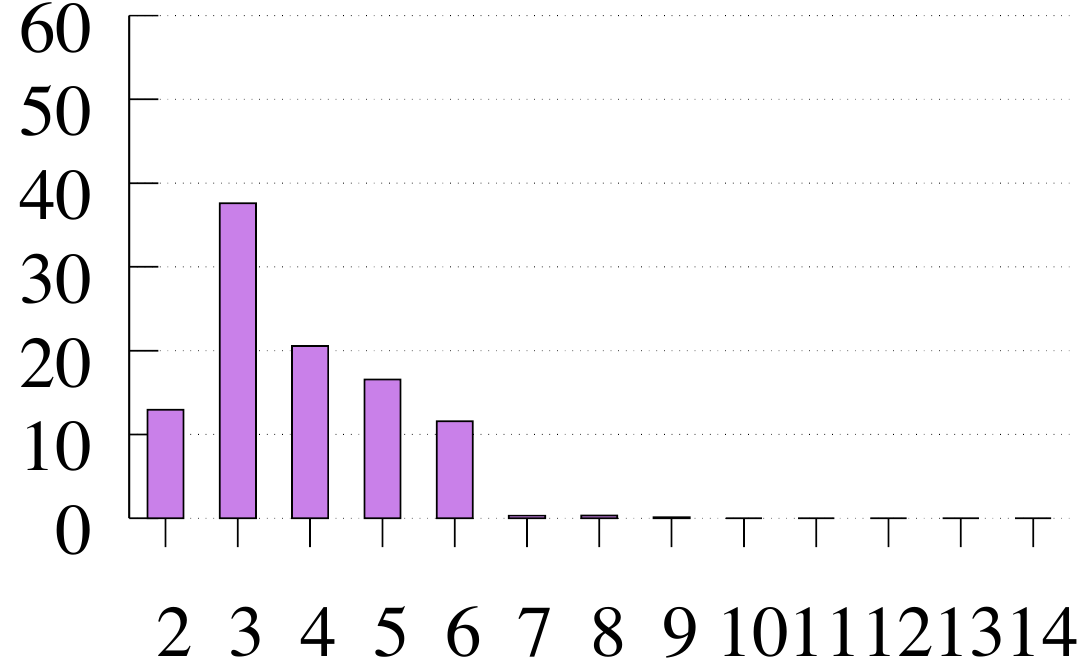}
		\caption{}
		\label{fig:roundschart_feas}
	\end{subfigure}%
	\begin{subfigure}{.33\columnwidth}
		\tabcolsep=0.06cm
			\begin{tabular}{l c c c c c}
				\hline\hline %inserts double horizontal lines
				Graph & $2$ & $3$ & $4$ & $5$ & $6$  \\ [0.5ex]
				%heading
				\hline % inserts single horizontal line
				AARNET & 46 & 50 & 4 & 0 & 0	\\ % __Aarnet.graphml.gv.txt
				Abilene & 37 & 57 & 6 & 0 & 0	\\ %__Abilene.graphml.gv.txt
				Belnet(2003) & 13 & 79 & 8 & 0.01 & 0	\\ %__Belnet2003.graphml.gv.txt
				GARR(2011) & 21 & 68 & 10 & 1 & 6e-4	\\ %__Garr201105.graphml.gv.txt
				SWITCH & 10 & 77 & 13 & 0.03 & 0	\\ %__Switch.graphml.gv.txt
				\hline %inserts single line
			\end{tabular}
		\caption{}
		\label{tbl:zoo}
	\end{subfigure}
	\caption{
		(\subref{fig:roundschart_opt}) The frequency of each possible number of rounds (in \%) in optimal schedules from \Cref{alg:mainopt},
		 (\subref{fig:roundschart_feas}) in arbitrary feasible schedules from \Cref{alg:main},
		and (\subref{tbl:zoo}) in optimal schedules for a sample of the examined graphs.
	}
	\label{fig:charts_rounds}
\end{figure}

\begin{figure}[!h]
	\begin{subfigure}{.5\columnwidth}
		\includegraphics[width=0.99\columnwidth]{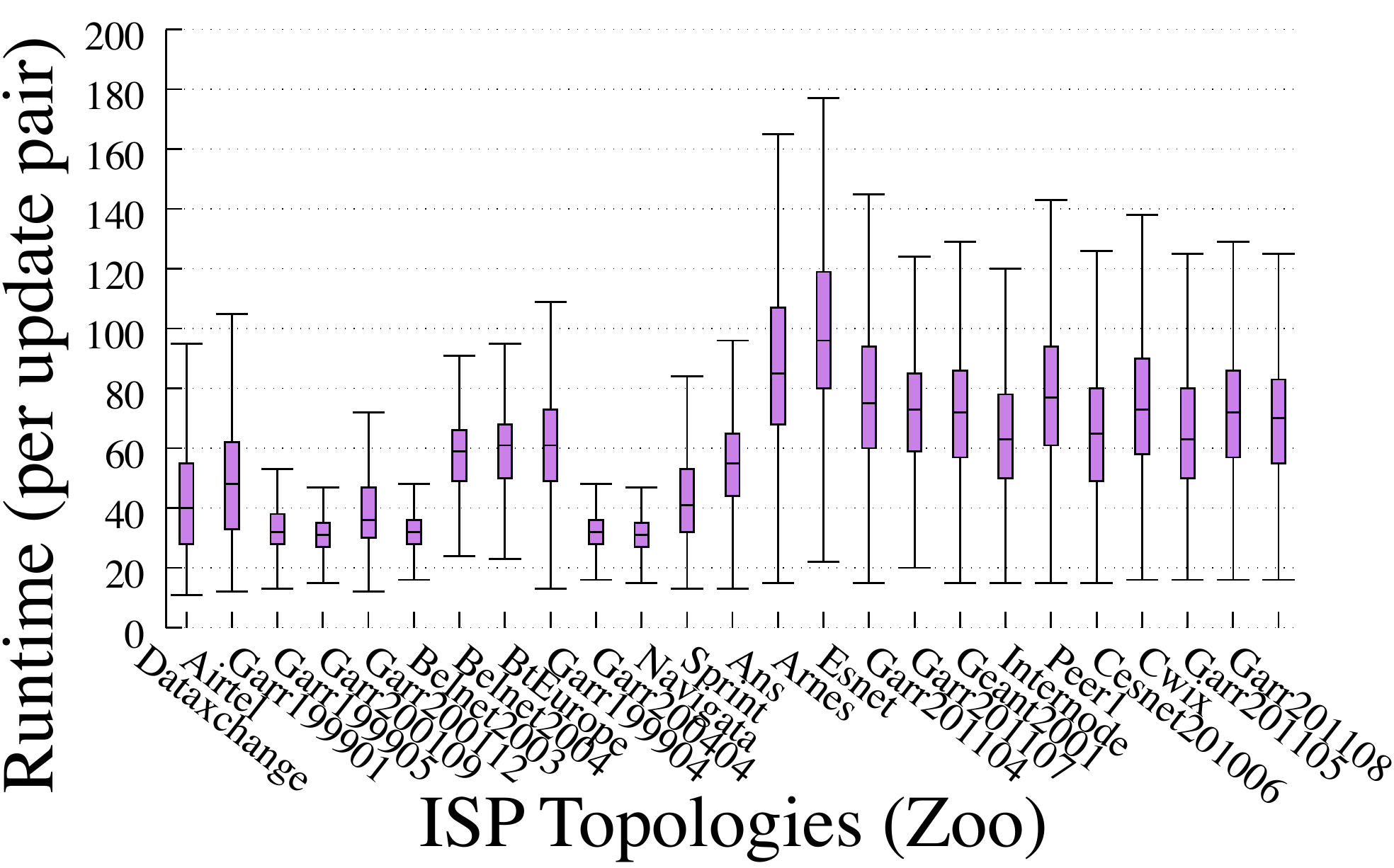}
		\caption{}
		\label{fig:times_plot}
	\end{subfigure}%
	\begin{subfigure}{.5\columnwidth}
		\includegraphics[width=0.99\columnwidth]{_DialtelecomCz.pdf}
		\caption{}		
		\label{fig:example_mip}
	\end{subfigure}	
		\caption{
			(\subref{fig:times_plot}) Distribution of runtime (in microsecond) over all the
			problem instances from some of the evaluated graphs.
			(\subref{fig:example_mip}) The graph (Dial Telecom) with a long runtime in our MIP implementation
			(the worst case takes $\approx26$ seconds)
		}
\end{figure}

\section{Related Work}\label{sec:relwork}

The fundamental problem of how to reroute flows
has recently received much attention in the networking
community and we refer the reader to the recent survey by 
Foerster et al.~\cite{update-survey} for an overview
of the field. 
Yet, today, and in contrast to the classic problem of how
to \emph{route} flows~\cite{AmiriKMR16,shimon-flows,KawarabayashiKK14,ufkleinberg,leighton1999multicommodity,ufskutella},
we still know surprisingly little about useful 
\emph{algorithmic} techniques for efficient 
flow \emph{rerouting}. 

There exist several empirical studies
motivating our model~\cite{dionysus,kuzniar2015you},
however, this literature is orthogonal to ours.
Moreover, many existing consistent network update algorithms
such as~\cite{infocom15,dionysus,zupdate,roger,abstractions}
require packet tagging and additional forwarding rules, which
render the problem different in nature.
Mahajan and Wattenhofer~\cite{roger}
initiated the study of flow rerouting algorithms which schedule
updates over time. The authors also presented first algorithms
to quickly updates routes in a transiently \emph{loop-free} manner~\cite{sirocco16update,Forster2016Consistent,Forster2016Power},
by maximizing \emph{the number of updates per round}.
A second line of research focuses on \emph{minimizing the number of rounds}
of loop-free updates~\cite{dsn16,sigmetrics16,ludwig2015scheduling,hotnets14update}.
	
As congestion is known to 
negatively affect application performance and
user experience, it has also been studied 
intensively in the context of flow rerouting problems.
	The seminal work by Hongqiang et al.~\cite{zupdate} 
	on congestion-free rerouting
	has already been extended in several papers, 
	using static~\cite{roger-infocom,swan,jaq4,icnp-jiaqi}, 
	dynamic~\cite{jaq1}, or time-based~\cite{jaq8,jaq10} 
	approaches. 
	Vissicchio et al.~presented FLIP~\cite{vissicchio2016flip},
	which combines
	per-packet consistent updates with order-based rule replacements, 
	in order to reduce memory overhead:
	additional rules are used only when necessary.
	Moreover, Hua et al.~\cite{huafoum}
	recently initiated the study of 
	adversarial settings,
	and presented FOUM, 
	a flow-ordered
	update mechanism that is robust to packet-tampering and packet dropping
	attacks. 
However, none of these papers present polynomial-time
algorithms for rerouting flows without requiring packet tagging.

Our work on polynomial-time algorithms is motivated in particular by 
the negative result by Ludwig et al.~\cite{ludwig2015scheduling}
who showed that deciding whether a loop-free 3-round
update schedule exists is NP-hard, even in the absence of
capacity constraints.
Given this negative result, much prior work typically resorts
to heuristics~\cite{icdcs17}, which however do not come with any
formal guarantees on the quality of the computed
schedule, or to algorithms which have a super-polynomial
runtime~\cite{sigmetrics16}. 
The only exception is the polynomial-time algorithm
by Amiri et al.~\cite{icalp18} for acyclic flow graphs, 
which however is limited to 
computing \emph{feasible} (possibly very long)
update schedules.
There are various differences between this paper and the recent work of
Amiri et al.:
(1) In contrast to our work where only flow pairs
need to form a DAG, \cite{icalp18} considers a much more restricted
model where the union of all flows must be acyclic. 
This restriction allows the authors to design an FPT
  algorithm for $k$ flows, whereas in our model 
  the problem is  NP-complete already for six flows. 
  Hence, different techniques are required to show hardness.
(2) Similarly to~\cite{icalp18}, our algorithm relies on 
a dependency graph that
  explains the relation between flows. However,
  since we aim to compute schedules only for two flows, 
we do not require the big machinery introduced for
  $k$ flows instead we provide a more elegant algorithm.
(3) At the same time, since in contrast to prior work, we focus on an optimal solution, 
our model is more chalelnging and requires new algorithmic ideas.

 Finally, our problem is situated in the larger context of
 combinatorial reconfiguration theory, which has recently
 received much attention, e.g., in the context of games~\cite{van2013complexity}.
  In this respect, 
 the reconfiguration model closest to ours is by 
 Bonsma~\cite{bonsma2013complexity} 
 	who studied
 	how to perform rerouting such that transient paths are always \emph{shortest}.
 	However, the corresponding techniques and results are not applicable
 	in our model where we consider flows of certain \emph{demands},
 	and where different flows may \emph{interfere} due to capacity constraints
 	in the underlying network.

%\stefan{take some more input from icalp intro}

\section{Conclusion and Future Work}\label{sec:conclusion}

This paper presented the first polynomial-time and optimal scheduling algorithm to 
quickly reroute two flows in a congestion-free manner. 
In particular, the algorithm can be used to minimize the number of required (asynchronous) interactions between switches and controller in a software-defined network. 
We also prove that our result cannot be generalized much further 
as the problem becomes NP-hard
already for six flows. 
One of the main open question of our work concerns the polynomial-time
tractability for $2<k<6$ flows. These cases 
 might be very challenging, and currently, we do not have any insights
on how to deal even with three flows.

In this paper we assumed that every pair of flows
forms a DAG, and we did not constrain the underlying network topology
which can be general. 
However, many real-world networks are sparse, and feature 
nice topological or combinatorial structures. The study of such
networks introduces another interesting direction for future research.
We currently do not know what is the complexity of the problem even on planar
graphs, one of the most interesting classes of sparse graphs. 

\bigskip
\noindent \textbf{Acknowledgements.} We would like
to thank Stephan Kreutzer, Arne Ludwig, 
 and Roman Rabinovich for discussions on this
problem.
The research of Saeed Akhoondian Amiri and Sebastian Wiederrecht has been supported by 
the European Research Council (ERC) under the European Union's Horizon
2020 research and innovation programme (ERC consolidator grant DISTRUCT,
agreement No 648527). 
%Stefan Schmid's research
%is supported by the Danish VILLUM foundation project \emph{ReNet}.

\balance

{
	\bibliography{literature}
}

%\newpage
%\appendix

\end{document}